\theoremstyle{plain}
\newtheorem{Th}{Theorem}%[section]
\newtheorem{Cor}[Th]{Corollary}
\newtheorem{Prop}[Th]{Proposition}
\theoremstyle{definition}
\newtheorem{Def}{Definition}%[section]
\theoremstyle{remark}
\newtheorem*{Rem}{Remark}%[section]
\newcommand{\PP}{{\mathbb P}}
\newcommand{\HH}{{\mathbb H}}
\newcommand{\CC}{{\mathbb C}}
\newcommand{\KK}{{\mathbb K}}
\newcommand{\FF}{{\mathbb F}}
\newcommand{\ZZ}{{\mathbb Z}}
\newcommand{\VV}{{\mathbb V}}
\newcommand{\RR}{{\mathbb R}}
\newcommand{\WW}{{\mathbb W}}
\newcommand{\bPhi}{\boldsymbol{\Phi}}
\begin{document}

\title[Quadrangular sets and discrete Schwarzian KP equation]
{Quadrangular sets in projective line  and in Moebius space,\\ and geometric interpretation of the non-commutative discrete Schwarzian Kadomtsev--Petviashvili  equation}

\author[A. Doliwa]{Adam Doliwa}

\address{A. Doliwa, Faculty of Mathematics and Computer Science, University of Warmia and Mazury,
ul.~S{\l}oneczna~54, 10-710~Olsztyn, Poland}

\email{doliwa@matman.uwm.edu.pl}
\urladdr{http://wmii.uwm.edu.pl/~doliwa/}

\author[J. Kosiorek]{Jaros{\l}aw Kosiorek}

\address{J. Kosiorek, Faculty of Mathematics and Computer Science, University of Warmia and Mazury,
ul.~S{\l}oneczna~54, 10-710~Olsztyn, Poland}

\email{kosiorek@matman.uwm.edu.pl}

\date{}
\keywords{discrete Schwarzian KP equation, Desargues maps, projective line, chain geometry, Moebius--Veblen configuration}
\subjclass[2010]{51B10, 51A20}

\begin{abstract}
We present geometric interpretation of the discrete Schwarzian Kadomtsev--Petviashvili equation in terms of quadrangular set of points of a projective line. We give also the corresponding interpretation for the projective line considered as a Moebius chain space. In this way we incorporate the conformal geometry interpretation of the equation into the projective geometry approach via Desargues maps.
\end{abstract}
\maketitle

\section{Introduction}
In the present paper we address two questions concerning geometric interpretation of the following discrete integrable system
\begin{equation} \label{eq:NCSKP}
(\phi_{(jk)} - \phi_{(k)})(\phi_{(jk)} - \phi_{(j)})^{-1}
(\phi_{(ij)} - \phi_{(j)})(\phi_{(ij)} - \phi_{(i)})^{-1}
(\phi_{(ik)} - \phi_{(i)})(\phi_{(ik)} - \phi_{(k)})^{-1} = 1, 
\end{equation}
where $\phi\colon\ZZ^N\to\FF$ is a map from $N$-dimensional integer lattice to a division ring $\FF$, and indices in brackets denote shifts in the corresponding variables, i.e.
$
\phi_{(i)}(n_1,\dots , n_N) = \phi(n_1, \dots , n_i + 1, \dots n_N).
$
The above equation appeared first as the generalized lattice spin equation in~\cite{FWN-Capel}, and was called the non-commutative discrete Schwarzian Kadomtsev--Petviashvili (SKP) equation in \cite{BoKo-N-KP,KoSchiefSDS-II}. 
As being one of various forms of the discrete Kadomtsev-Petviashvili (KP) system~\cite{Hirota,KNS-rev}, equation~\eqref{eq:NCSKP} plays pivotal role in the theory of integrable systems and its applications.

Relation between geometry of submanifolds and integrable systems is an ongoing research subject which can be dated back to second half of XIX-th century~\cite{DarbouxIV}. In fact, geometric approach to discrete integrable systems initiated in~\cite{BP2,MQL,KoSchief2}, see also \cite{BobSur} for a review, demonstrates that the basic principles of the theory are encoded in incidence geometry statements, some of them known in antiquity.

For example, complex $\FF = \CC$ version of equation~\eqref{eq:NCSKP}  was identified in~\cite{KoSchief-Men} as a multi-ratio condition which describes generalization to conformal geometry of circles of the Menelaus theorem in the metric geometry~\cite{Coxeter-Greitzer}. Quaternionic version of the equation was studied in~\cite{KoSchiefSDS-II,KoSchief-qSKP}, see also \cite{King-Schief-mult} for other geometric interpretations of the multi-ratio condition in relation to integrable discrete systems. 

The more recently introduced notion of Desargues maps \cite{Dol-Des}, as underlying property of discrete KP equation considers collinearity of three points. This approach works in projective geometries over division rings and leads directly to the linear problem for the equation in its non-Abelian Hirota--Miwa form~\cite{Nimmo-NCKP}. We remark that that the Desargues maps give new understanding ~\cite{Dol-Des,Dol-Des-red} of the previously studied discrete conjugate nets \cite{MQL}. These are characterized by planarity of  elementary quadrilaterals (see also \cite{Sauer2,Doliwa-T}). The compatibility condition for Desargues maps gives projective Menelaus theorem, but leaves open the following \textbf{Question 1: Can the conformal geometry interpretation of the discrete Schwarzian Kadomtsev--Petviashvili equation be incorporated into the Desargues map approach?} Notice that the recent generalization of the Desargues theorem to  context of conformal geometry~\cite{KingSchief-Des} may suggest something opposite. 

When studying reductions of the Desargues maps, as for example in~\cite{Doliwa-GD}, one is forced to restrict dimension of the ambient projective space up to ``Desargues maps into projective line''. Even if the linear problem is well defined there the geometric condition, which defines the maps, is empty. This leads to \textbf{Question 2: What should replace the Desargues map condition for the ambient space being projective line?} We remark that the analogous problem for discrete conjugate nets in the ambient space being a plane was successfully solved in~\cite{Adler}.

Our answer for both questions is based on the notion of the quadrangular sets, which was introduced by von~Staudt in his seminal work~\cite{Staudt} as a tool to provide axiomatization of the projective geometry. We remark that quadrangular sets of points appeared in integrable discrete geometry in theory of the $B$-quadrilateral lattice \cite{Doliwa-BQL}, but in the context of the Pappus theorem and the Moebius pair of tetrahedra, which is outside of the interest of the present paper. 

In Section \ref{sec:line} we first recall basic ingredients of the geometry of the projective line, in particular the notion of cross-ratio in the general non-commutative case~\cite{Baer}. We also formulate the corresponding concept of the multi-ratio of six points on the projective line over a division ring, which generalizes the definition known for commutative case in terms of two cross-ratios or determinants. We show that quadrangular sets of points are fully characterized by the ``multiratio equals one'' condition also in the non-commutative case (as the commutative case is well known \cite{Richter-Gebert}). This gives our answer to Question~2, which we present in Section~\ref{sec:Des-line}.

Our answer to Question~1, which we present in Section~\ref{sec:CG}, is also implied by geometry of the projective line, but this time the line is equipped with additional structure.  When the division ring $\FF$ contains a subfield $\KK$ in its center then $\FF$-projective images of the canonically embedded $\KK$-line form the so called chains. This leads to the concept of Moebius chain geometry \cite{Benz,Herzer}. We show that in such spaces certain quadrangular sets have particular interpretation in terms of the so called Moebius--Veblen chain configuration. 
In the simplest case of the classical Moebius geometry, where $\KK=\RR\subset\CC=\FF$ the chains are circles (homographic images of the real line in the complex conformal plane), and our approach gives that of Konopelchenko and Schief~\cite{KoSchief-Men}.

\section{Projective geometry of a line} \label{sec:line}
\subsection{Cross-ratio and multi-ratio in projective geometry over division rings}
A \emph{right linear space} $(\mathbb{F},\VV)$ consists of a division ring $\mathbb{F}$  and an additive abelian group $\VV$ such that $\mathbb{F}$ acts on $\VV$ from the right satisfying usual axioms. The corresponding projective geometry studies linear subspaces of the $\mathbb{F}$-space $\VV$. The points of the corresponding projective space $\PP(\FF,\VV)$ are one dimensional subspaces of $(\FF,\VV)$.
\begin{Rem}
For simplicity we assume that $\FF$ is of characteristic zero,  but we expect that also finite characteristic may be relevant and give interesting results~\cite{BiaDol}.
\end{Rem}

\emph{A collineation of the linear space  $(\mathbb{F},\VV)$ upon the linear  space  $(\mathbb{F},\WW)$} is a bijective and order preserving mapping $\sigma$ of the partially ordered (by inclusion) set of subspaces of $\VV$ upon the set of subspaces of $\WW$. 
When dimension of $\VV$ is at least three, any such collineation is given by a semi-linear map, i.e. linear map $\VV \to \WW$ and supplemented by an automorphism of the division ring. 

The case of two dimensional linear spaces (i.e. projective lines) needs a special treatment. Then any bijection of projective line can be called collineation. There arises the problem of characterizing those maps which are induced by semi-linear maps of $(\mathbb{F},\VV)$. The first step in that direction (the full answer can be found in~\cite{Baer}) makes use of a generalization of the classical notion of cross ratio. 
\begin{Def}[\cite{Baer}]
	Suppose that $P,Q,R,S$ are four distinct points on the line $L$. Then the number $c\in\mathbb{F}$ belongs to the cross ratio $\left[ \begin{array}{cc} P & Q \\ S & R \end{array} \right] \subset \FF$ if there exist elements $\boldsymbol{p} ,\boldsymbol{q} \in\VV$ such that
	\begin{equation*}
	P = \langle \boldsymbol{p} \rangle , \qquad Q = \langle \boldsymbol{q} \rangle , \qquad R = \langle \boldsymbol{p} + \boldsymbol{q}  \rangle , \qquad S = \langle \boldsymbol{p} + \boldsymbol{q} c \rangle .  
	\end{equation*}
\end{Def}
Below we present the  known expression of the cross-ratio in terms of non-homogeneous coordinates.
\begin{Th}
If $\boldsymbol{k} ,\boldsymbol{l}$ are two independent elements of $ \in\VV$, and 
$p,q,r,s$ are four distinct elements of $\FF$ then
\begin{equation*}
\left[ \begin{array}{cc} 
\langle \boldsymbol{k} + \boldsymbol{l}p \rangle & 
\langle \boldsymbol{k} + \boldsymbol{l}q \rangle \\ 
\langle \boldsymbol{k} + \boldsymbol{l}s \rangle & 
\langle \boldsymbol{k} + \boldsymbol{l}r \rangle \end{array} \right] =
\left[ (s-q)^{-1} (p-s) (p-r)^{-1} (r-q) \right],
\end{equation*}	
where by for $c\in \FF$ by $[c] = \{  aca^{-1} \, | \,  a\in\FF\setminus\{0\}\} $
we denote the equivalence class of conjugate elements.
\end{Th}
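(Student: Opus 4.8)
The plan is to reduce the statement to the definition of the cross-ratio by exhibiting explicit representatives $\boldsymbol{p},\boldsymbol{q}\in\VV$ adapted to the four given points. First I would set $\boldsymbol{p} = \boldsymbol{k}+\boldsymbol{l}p$ and $\boldsymbol{q} = \boldsymbol{k}+\boldsymbol{l}q$, which are independent since $p\neq q$ and $\boldsymbol{k},\boldsymbol{l}$ are independent; thus $\langle\boldsymbol{p}\rangle$ and $\langle\boldsymbol{q}\rangle$ are the first two points in the bracket. The point is now to find scalars $\alpha,\beta\in\FF\setminus\{0\}$ so that $\langle\boldsymbol{p}\alpha+\boldsymbol{q}\beta\rangle = \langle\boldsymbol{k}+\boldsymbol{l}r\rangle$, and then a further scalar $c$ so that, after rescaling $\boldsymbol{p},\boldsymbol{q}$ by $\alpha,\beta$, we get $\langle(\boldsymbol{p}\alpha)+(\boldsymbol{q}\beta)c\rangle = \langle\boldsymbol{k}+\boldsymbol{l}s\rangle$. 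Expanding $\boldsymbol{p}\alpha+\boldsymbol{q}\beta = \boldsymbol{k}(\alpha+\beta) + \boldsymbol{l}(p\alpha+q\beta)$, the condition to be proportional to $\boldsymbol{k}+\boldsymbol{l}r$ forces, up to an overall right factor, $\alpha+\beta=1$ and $p\alpha+q\beta=r$; solving the linear system (here one must be careful that scalars multiply $\boldsymbol{l}$ from the right, so $p\alpha+q(1-\alpha)=r$ gives $(p-q)\alpha = r-q$, hence $\alpha=(p-q)^{-1}(r-q)$ — wait, rather $\alpha$ satisfies $p\alpha - q\alpha = r-q$ is wrong since $p,q$ don't commute with $\alpha$; instead write $\boldsymbol{l}(p\alpha+q\beta)$ and demand $p\alpha+q\beta=r$, $\alpha+\beta=1$, so $\beta=1-\alpha$ and $p\alpha+q-q\alpha=r$, i.e. $(p-q)\alpha=r-q$, giving $\alpha=(p-q)^{-1}(r-q)$ and $\beta=(p-q)^{-1}(p-r)$).

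Next I would carry out the same computation for $S$: writing $\boldsymbol{p}\alpha+\boldsymbol{q}\beta c$ we need, up to a common right factor $\lambda$, the relations $\alpha+\beta c=\lambda$ and $p\alpha+q\beta c=s\lambda$, whence eliminating $\lambda$ gives $p\alpha+q\beta c = s(\alpha+\beta c)$, i.e. $(p-s)\alpha=(s-q)\beta c$, so that $c=\beta^{-1}(s-q)^{-1}(p-s)\alpha$. Substituting the values $\alpha=(p-q)^{-1}(r-q)$ and $\beta=(p-q)^{-1}(p-r)$ found above and simplifying the $(p-q)^{-1}$ factors, this becomes $c = (p-r)^{-1}(s-q)^{-1}(p-s)(r-q)$, which is a conjugate of $(s-q)^{-1}(p-s)(p-r)^{-1}(r-q)$ by the element $(p-r)$ — here one uses that $u^{-1}v^{-1}wx$ and $v^{-1}wxu^{-1}$ differ by conjugation by $u$, reshuffling into the claimed cyclic form. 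Finally I would note that the cross-ratio bracket is by definition the full conjugacy class $[c]$, so the particular representative obtained, and the asserted one $(s-q)^{-1}(p-s)(p-r)^{-1}(r-q)$, name the same class, completing the proof.

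The main obstacle, and the place demanding the most care, is the bookkeeping of left versus right multiplication: since $\VV$ is a right $\FF$-space, all scalar coefficients sit on the right of $\boldsymbol{k},\boldsymbol{l}$, so when one factors out a common scalar from $\boldsymbol{k}(\alpha+\beta c)+\boldsymbol{l}(p\alpha+q\beta c)$ the factor comes out on the right and the resulting scalar equations are genuinely non-commutative; one cannot cancel or reorder $p,q,r,s$ and $\alpha,\beta,c$ freely. Consequently the final identification of $c$ with the stated expression is only up to conjugation, which is exactly why the theorem is phrased with the equivalence class $[\,\cdot\,]$, and I would emphasize that checking the two expressions are conjugate (rather than equal) is the essential content of the last step. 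A secondary minor point is verifying genericity — that all the differences $p-q$, $p-r$, $s-q$, $p-s$, $r-q$ are invertible — which follows from the four elements $p,q,r,s$ being pairwise distinct.
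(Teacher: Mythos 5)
Your overall strategy---choose explicit representatives, solve for the scalars, and read off $c$ up to conjugacy---is sound, and it is essentially the method the paper itself uses for the multi-ratio analogue (Proposition~\ref{prop:m-alg}); the paper prints no proof of this particular theorem, quoting it from Baer. However, your final step contains a genuine error. From $(p-s)\alpha=(s-q)\beta c$ you correctly obtain $c=\beta^{-1}(s-q)^{-1}(p-s)\alpha$ with $\alpha=(p-q)^{-1}(r-q)$ and $\beta=(p-q)^{-1}(p-r)$, that is
\[
c=(p-r)^{-1}(p-q)\,(s-q)^{-1}(p-s)\,(p-q)^{-1}(r-q).
\]
``Simplifying the $(p-q)^{-1}$ factors'' to get $(p-r)^{-1}(s-q)^{-1}(p-s)(r-q)$ is precisely the illegal move you warn against: the two occurrences of $(p-q)^{\pm1}$ are separated by $(s-q)^{-1}(p-s)$ and do not cancel. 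Worse, the expression you land on is generically in the wrong conjugacy class, not merely wrongly derived: cyclically it equals $(s-q)^{-1}(p-s)\cdot(r-q)\cdot(p-r)^{-1}$, while the target is $(s-q)^{-1}(p-s)\cdot(p-r)^{-1}\cdot(r-q)$, and the last two factors need not commute. In $\HH$ with $p=i$, $q=j$, $r=k$, $s=1$ the first product equals $j$ (real part $0$) and the second equals $\tfrac12(1-i-j+k)$ (real part $\tfrac12$), so they are not conjugate; your claimed conjugation by $(p-r)$ only cyclically permutes an entire prefix, it cannot transpose the two trailing factors.

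The repair is short. Do not substitute the normalized values; instead use the relation $\beta=(r-q)^{-1}(p-r)\alpha$, which follows directly from $(p-r)\alpha=(r-q)\beta$ (and is consistent with your explicit $\alpha,\beta$ because $(p-r)(p-q)^{-1}(r-q)=(r-q)(p-q)^{-1}(p-r)$, a consequence of $(p-q)=(p-r)+(r-q)$). Then
\[
c=\beta^{-1}(s-q)^{-1}(p-s)\,\alpha=\alpha^{-1}\,(p-r)^{-1}(r-q)(s-q)^{-1}(p-s)\,\alpha ,
\]
which is manifestly a conjugate of $(p-r)^{-1}(r-q)(s-q)^{-1}(p-s)$, hence, by $AB\sim BA$ applied to the two two-factor blocks, lies in $\left[(s-q)^{-1}(p-s)(p-r)^{-1}(r-q)\right]$ as claimed. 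The remaining ingredients of your write-up---that the set of admissible $c$ is a full conjugacy class, and that all differences involved are invertible because $p,q,r,s$ are distinct---are fine.
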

\begin{Rem}
Given three distinct points $P,Q,R\in L$ and given $c\in \FF\setminus\{0,1\}$ there exists the fourth point $S\in L$, distinct from the previous ones, such that $c \in \left[ \begin{array}{cc} P & Q \\ S & R \end{array} \right]$;
$c=0$ corresponds to $S=P$, $c=1$ corresponds to $S=R$, while to admit $S=Q$ we need to give the value $c=\infty$.
\end{Rem}
\begin{Rem}
	When points $P,Q,R$ have been taken as projective basis of the line, i.e. $p=0,q=\infty, r = 1$, then $\left[ \begin{array}{cc} P & Q \\ S & R \end{array} \right]= [s]$.
\end{Rem}
The following result for $\dim \VV \geq 3$ justifies the use of cross-ratio in describing geometry of the projective line.
\begin{Th} Suppose that $(P,Q,R,S)$ and $(P^\prime,Q^\prime,R^\prime, S^\prime)$
	are quadruples of distinct collinear points.
	There exists collineation $\pi$ of the linear space $(\FF,\VV)$, $\dim \VV \geq 3$. such that $\pi(P) = P^\prime$, $\pi(Q) = Q^\prime$, $\pi(R) = R^\prime$, $\pi(S) = S^\prime$ if and only if there exists an automorphism $\alpha$ of the division ring $\FF$ such that 
	\begin{equation*}
	\left[ \begin{array}{cc} P & Q \\ S & R \end{array} \right]^\alpha = \left[ \begin{array}{cc} P^\prime & Q^\prime \\ S^\prime & R^\prime \end{array} \right].
	\end{equation*}
\end{Th}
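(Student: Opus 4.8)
The plan is to use the result recalled above (valid for $\dim\VV\geq 3$): every collineation of $(\FF,\VV)$ is induced by a semi-linear bijection $f\colon\VV\to\VV$, i.e.\ $f$ is additive and $f(\boldsymbol{v}x)=f(\boldsymbol{v})\alpha(x)$ for a companion automorphism $\alpha$ of $\FF$. It is convenient to keep in mind, from the previous theorem, that the cross-ratio $\left[ \begin{array}{cc} P & Q \\ S & R \end{array} \right]$ is a single conjugacy class $[c]$ with $c\in\FF\setminus\{0,1\}$ (the four points being distinct), and that $\left[ \begin{array}{cc} P & Q \\ S & R \end{array} \right]^\alpha=[\alpha(c)]$ is again a conjugacy class.

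For the forward implication, suppose $\pi$ is a collineation with $\pi(P)=P'$, $\pi(Q)=Q'$, $\pi(R)=R'$, $\pi(S)=S'$, and write $\pi$ as induced by an $\alpha$-semi-linear bijection $f$. If $c\in\left[ \begin{array}{cc} P & Q \\ S & R \end{array} \right]$, choose $\boldsymbol{p},\boldsymbol{q}\in\VV$ with $P=\langle\boldsymbol{p}\rangle$, $Q=\langle\boldsymbol{q}\rangle$, $R=\langle\boldsymbol{p}+\boldsymbol{q}\rangle$, $S=\langle\boldsymbol{p}+\boldsymbol{q}c\rangle$. Applying $f$ and using additivity together with semi-linearity gives $P'=\langle f(\boldsymbol{p})\rangle$, $Q'=\langle f(\boldsymbol{q})\rangle$, $R'=\langle f(\boldsymbol{p})+f(\boldsymbol{q})\rangle$ and $S'=\langle f(\boldsymbol{p})+f(\boldsymbol{q})\alpha(c)\rangle$, hence $\alpha(c)\in\left[ \begin{array}{cc} P' & Q' \\ S' & R' \end{array} \right]$. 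This proves one inclusion of conjugacy classes; applying the same argument to $\pi^{-1}$, which is induced by the $\alpha^{-1}$-semi-linear bijection $f^{-1}$, yields the reverse inclusion, so the two cross-ratios agree after applying $\alpha$.

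For the converse I would build $f$ explicitly after a normalisation of homogeneous coordinates. Since $R\neq P,Q$ lies on the line $PQ$, representatives can be rescaled so that $P=\langle\boldsymbol{p}\rangle$, $Q=\langle\boldsymbol{q}\rangle$, $R=\langle\boldsymbol{p}+\boldsymbol{q}\rangle$; then $S=\langle\boldsymbol{p}+\boldsymbol{q}c\rangle$ for a unique $c\in\FF\setminus\{0,1\}$, and likewise $P'=\langle\boldsymbol{p}'\rangle$, $Q'=\langle\boldsymbol{q}'\rangle$, $R'=\langle\boldsymbol{p}'+\boldsymbol{q}'\rangle$, $S'=\langle\boldsymbol{p}'+\boldsymbol{q}'c'\rangle$ with a unique $c'$. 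The decisive observation is that the only rescaling preserving $R=\langle\boldsymbol{p}+\boldsymbol{q}\rangle$ is $\boldsymbol{p}\mapsto\boldsymbol{p}\lambda$, $\boldsymbol{q}\mapsto\boldsymbol{q}\lambda$, and it replaces $c$ by $\lambda^{-1}c\lambda$; this is precisely why the cross-ratio is a conjugacy class. Now the hypothesis $\left[ \begin{array}{cc} P & Q \\ S & R \end{array} \right]^\alpha=\left[ \begin{array}{cc} P' & Q' \\ S' & R' \end{array} \right]$ reads $c'=\lambda^{-1}\alpha(c)\lambda$ for some $\lambda\in\FF\setminus\{0\}$, so after replacing $\boldsymbol{p}'$ by $\boldsymbol{p}'\lambda^{-1}$ and $\boldsymbol{q}'$ by $\boldsymbol{q}'\lambda^{-1}$ we may assume $S'=\langle\boldsymbol{p}'+\boldsymbol{q}'\alpha(c)\rangle$. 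Finally extend $\{\boldsymbol{p},\boldsymbol{q}\}$ and $\{\boldsymbol{p}',\boldsymbol{q}'\}$ to bases of $\VV$ over a common index set, and let $f\colon\VV\to\VV$ be the $\alpha$-semi-linear bijection carrying the first basis to the second with $\boldsymbol{p}\mapsto\boldsymbol{p}'$ and $\boldsymbol{q}\mapsto\boldsymbol{q}'$; its induced collineation $\pi$ satisfies $\pi(P)=P'$, $\pi(Q)=Q'$, $\pi(R)=\langle\boldsymbol{p}'+\boldsymbol{q}'\rangle=R'$ and $\pi(S)=\langle\boldsymbol{p}'+\boldsymbol{q}'\alpha(c)\rangle=S'$, as required.

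The genuinely delicate point, and the one I would write out most carefully, is the bookkeeping of left versus right multiplication in the non-commutative setting: one must verify that the residual freedom in the normalised coordinate $c$ is conjugation, and on which side, so that the hypothesis expressed through conjugacy classes translates into an admissible change of homogeneous coordinates. The remaining ingredients --- the existence of an $\alpha$-semi-linear bijection with a prescribed companion automorphism (diagonal in any basis) and the completion of a pair of independent vectors to a basis --- are routine.
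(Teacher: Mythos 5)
The paper does not prove this theorem at all: it is quoted as a known result from Baer's \emph{Linear Algebra and Projective Geometry}, so there is no in-paper argument to compare against. Your proof is correct and is essentially the standard one. The forward direction correctly combines the First Fundamental Theorem (every collineation for $\dim\VV\geq 3$ is induced by an $\alpha$-semi-linear bijection, which the paper itself recalls) with the computation $f(\boldsymbol{p}+\boldsymbol{q}c)=f(\boldsymbol{p})+f(\boldsymbol{q})\alpha(c)$; the appeal to $\pi^{-1}$ for the reverse inclusion is slightly more than needed, since both cross-ratios are single conjugacy classes and one common element already forces equality, but it is harmless and keeps the step self-contained. In the converse, the point you single out as delicate is indeed the only one, and you resolve it correctly: with right scalar action the only renormalisation fixing $\langle\boldsymbol{p}\rangle$, $\langle\boldsymbol{q}\rangle$, $\langle\boldsymbol{p}+\boldsymbol{q}\rangle$ is the simultaneous $\boldsymbol{p}\mapsto\boldsymbol{p}\lambda$, $\boldsymbol{q}\mapsto\boldsymbol{q}\lambda$, which sends $c$ to $\lambda^{-1}c\lambda$, so the hypothesis on conjugacy classes is exactly the freedom needed to align $S'$ with $\langle\boldsymbol{p}'+\boldsymbol{q}'\alpha(c)\rangle$ before extending to bases. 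The only cosmetic remark is that the uniqueness of $c$ given the chosen representatives, and the fact that $c\in\FF\setminus\{0,1\}$ precisely because the four points are distinct, deserve the one-line verification you allude to; nothing in the argument fails.
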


We present below the analogous geometric notion of multi-ratio in the non-commutative case, which we adapted from known definition in the commutative case in terms of two cross-ratios~\cite{MorleyMusselman,King-Schief-mult,Richter-Gebert}. Like for the non-commutative cross-ratio our geometric definition leads to a class of conjugate elements of the division ring.

\begin{Def} \label{def:multi-ratio}
	Suppose that $P,Q,R,S,T,U$ are six distinct points on the line $L$. Then the number $m\in\mathbb{F}$ belongs to the multi-ratio 
	$\left[ \begin{array}{ccc} P & Q & S\\ U &  R & T \end{array} \right] \subset \FF$ 
	if there exist elements $\boldsymbol{p} ,\boldsymbol{q},\boldsymbol{s}  \in\VV$ such that
	\begin{gather*}
	P = \langle \boldsymbol{p} \rangle , \qquad Q = \langle \boldsymbol{q} \rangle ,
	\qquad S = \langle \boldsymbol{s} \rangle ,
	 \qquad R = \langle \boldsymbol{p} + \boldsymbol{q}  \rangle , \qquad 
	 \qquad T = \langle \boldsymbol{p} + \boldsymbol{s}  \rangle , \qquad \\
	 U = \langle \boldsymbol{p} + \boldsymbol{q} a \rangle , \qquad
	 U = \langle \boldsymbol{p} + \boldsymbol{s} b \rangle \qquad m = ab^{-1}.  
	\end{gather*}
\end{Def}
\begin{Prop} \label{prop:m-alg}
	If $\boldsymbol{k} ,\boldsymbol{l}$ are two independent elements of $ \in\VV$, and 
	$p,q,r,s,t,u$ are six distinct elements of $\FF$ then
	\begin{equation} \label{eq:m-alg}
	\left[ \begin{array}{ccc} 
	\langle \boldsymbol{k} + \boldsymbol{l}p \rangle & 
	\langle \boldsymbol{k} + \boldsymbol{l} q \rangle & 
	\langle \boldsymbol{k} + \boldsymbol{l} s \rangle \\ 
	\langle \boldsymbol{k} + \boldsymbol{l}u \rangle &   
	\langle \boldsymbol{k} + \boldsymbol{l} r \rangle &
	\langle \boldsymbol{k} + \boldsymbol{l} t \rangle \end{array} \right] =
	\left[ (r-p)^{-1} (q-r) (q-u)^{-1} (s-u) (s-t)^{-1} (t-p)\right].
	\end{equation}
\end{Prop}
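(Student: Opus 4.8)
The plan is to exhibit convenient representatives $\boldsymbol{p},\boldsymbol{q},\boldsymbol{s}$ realizing the configuration of Definition~\ref{def:multi-ratio} and then read off $a$ and $b$ straight from the linear algebra. Using the freedom $\boldsymbol{p}\in P$, I would first normalize $\boldsymbol{p}=\boldsymbol{k}+\boldsymbol{l}p$. Writing $\boldsymbol{q}=(\boldsymbol{k}+\boldsymbol{l}q)\nu$ and imposing $R=\langle\boldsymbol{p}+\boldsymbol{q}\rangle$, the vector $\boldsymbol{p}+\boldsymbol{q}=\boldsymbol{k}(1+\nu)+\boldsymbol{l}(p+q\nu)$ must be proportional to $\boldsymbol{k}+\boldsymbol{l}r$, which forces $p+q\nu=r(1+\nu)$, hence $\nu=(r-q)^{-1}(p-r)$; this $\nu$ is invertible since the six scalars are distinct. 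Symmetrically, writing $\boldsymbol{s}=(\boldsymbol{k}+\boldsymbol{l}s)\rho$ and imposing $T=\langle\boldsymbol{p}+\boldsymbol{s}\rangle$ gives $\rho=(t-s)^{-1}(p-t)$. The pairs $\boldsymbol{p},\boldsymbol{q}$ and $\boldsymbol{p},\boldsymbol{s}$ are independent because $p,q,t$ are pairwise distinct, so these choices are legitimate.

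Next I would compute $a$ and $b$ from the two descriptions of $U$. Substituting $\boldsymbol{p}+\boldsymbol{q}a=\boldsymbol{k}(1+\nu a)+\boldsymbol{l}(p+q\nu a)$ and demanding that this lie in $\langle\boldsymbol{k}+\boldsymbol{l}u\rangle$ yields $\nu a=(u-q)^{-1}(p-u)$, hence $a=(p-r)^{-1}(r-q)(u-q)^{-1}(p-u)$; in the same way $b=(p-t)^{-1}(t-s)(u-s)^{-1}(p-u)$. Forming $m=ab^{-1}$ the factor $(p-u)$ cancels, leaving $m=(p-r)^{-1}(r-q)(u-q)^{-1}(u-s)(t-s)^{-1}(p-t)$. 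Replacing each difference by its negative introduces six sign changes that cancel, so $m$ equals the representative $(r-p)^{-1}(q-r)(q-u)^{-1}(s-u)(s-t)^{-1}(t-p)$ appearing on the right-hand side of~\eqref{eq:m-alg}; thus this number lies in the multi-ratio.

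It remains to show the multi-ratio is exactly the single conjugacy class of this representative, i.e. that every admissible triple $(\boldsymbol{p}',\boldsymbol{q}',\boldsymbol{s}')$ yields a conjugate value. Writing $\boldsymbol{p}'=\boldsymbol{p}\xi$, $\boldsymbol{q}'=\boldsymbol{q}\eta$, $\boldsymbol{s}'=\boldsymbol{s}\zeta$ with $\xi,\eta,\zeta\in\FF\setminus\{0\}$, the requirement that $\boldsymbol{p}'+\boldsymbol{q}'$ span the same line $R$ as $\boldsymbol{p}+\boldsymbol{q}$ forces $\xi=\eta$ by independence of $\boldsymbol{p},\boldsymbol{q}$, and the analogous requirement for $T$ forces $\xi=\zeta$; so the triple is merely rescaled on the right by a common $\xi$. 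Then $U=\langle\boldsymbol{p}'+\boldsymbol{q}'a'\rangle=\langle\boldsymbol{p}+\boldsymbol{q}(\xi a'\xi^{-1})\rangle$ gives $a=\xi a'\xi^{-1}$ and likewise $b=\xi b'\xi^{-1}$, whence $m'=a'b'^{-1}=\xi^{-1}m\xi$. Combining this with the explicit computation proves~\eqref{eq:m-alg}. The only delicate point is the bookkeeping of the left/right order of factors — keeping the normalizing scalars $\nu,\rho$ on the right of $\boldsymbol{q},\boldsymbol{s}$ so that $a=\nu^{-1}(u-q)^{-1}(p-u)$ comes out in the stated order, and dually in the well-definedness step; once that is set up, the cancellation of $(p-u)(p-u)^{-1}$ and the even parity of the sign changes are automatic.
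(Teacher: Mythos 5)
Your argument is correct and follows essentially the same route as the paper's proof: you introduce proportionality factors relating the representatives of $P,Q,R,U$ (and $P,S,T,U$) to the vectors $\boldsymbol{k}+\boldsymbol{l}x$, eliminate them to get $a$ and $b$, and observe that $(p-u)$ cancels in $ab^{-1}$; the paper does exactly this, keeping the factor $\lambda$ for $\boldsymbol{p}$ arbitrary so that the conjugacy-class statement comes out in one pass, whereas you normalize $\lambda=1$ and handle the remaining freedom in a separate (correct) step. No gaps.
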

\begin{proof}
	Let the vectors $\boldsymbol{p} ,\boldsymbol{q}  \in\VV$ be such as in Def.~\ref{def:multi-ratio}, define the factors $\lambda,\mu,\sigma, \rho \in \FF\setminus\{0\} $ for points $P,Q,R,U$
	by
	\begin{equation*}
	(\boldsymbol{k} + \boldsymbol{l}p)\lambda = \boldsymbol{p}, \qquad
	(\boldsymbol{k} + \boldsymbol{l}q)\mu = \boldsymbol{q}, \qquad
	(\boldsymbol{k} + \boldsymbol{l}r)\sigma = \boldsymbol{p} + \boldsymbol{q} , \qquad
	(\boldsymbol{k} + \boldsymbol{l}u)\rho = \boldsymbol{p} + \boldsymbol{q} a .
	\end{equation*} 
Elimination of the factors $\mu,\sigma, \rho$ leads to the relation 
\begin{equation*}
a = \lambda^{-1} (r-p)^{-1} (q-r) (q-u)^{-1} (u-p) \lambda.
\end{equation*}
Similar reasoning, but for points $P,S,T,U$ leads to similar relation 
\begin{equation*}
b = \lambda^{-1} (t-p)^{-1} (s-t) (s-u)^{-1} (u-p) \lambda,
\end{equation*}
which, combined with the previous one, concludes the proof.
\end{proof}
\begin{Rem}
When the division ring $\FF$ is commutative, our definition of the multi-ratio 	$\left[ \begin{array}{ccc} P & Q & S\\ U &  R & T \end{array} \right] \in \FF$ reduces to  the product 
	$\left[ \begin{array}{cc} P & Q \\ U &  R  \end{array} \right] \left[ \begin{array}{cc} P & S\\ U &   T \end{array} \right]^{-1}$ of two cross-ratios.
\end{Rem}
\begin{Prop} \label{prop:mult-inv}
	The multi-ratio is an invariant of the collineations induced by linear transformations of the space $(\FF,\VV)$.
\end{Prop}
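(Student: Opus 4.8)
The plan is to argue directly from the geometric Definition~\ref{def:multi-ratio}, which is better suited here than the coordinate formula~\eqref{eq:m-alg}: a linear transformation carries the data defining the multi-ratio to analogous data without any division-ring automorphism intervening. Let $L\colon\VV\to\VV$ be a bijective right-linear map, and let $\pi$ be the collineation it induces, so that $\pi(\langle\bv\rangle)=\langle L\bv\rangle$ for every nonzero $\bv\in\VV$. Since $L$ is injective and linear it sends distinct one-dimensional subspaces to distinct one-dimensional subspaces, so for six distinct collinear points $P,Q,R,S,T,U$ the images $\pi(P),\dots,\pi(U)$ are again six distinct collinear points and the multi-ratio of the image sextuple is well defined.

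First I would take any $m$ in the multi-ratio $\left[\begin{smallmatrix}P & Q & S\\ U & R & T\end{smallmatrix}\right]$, together with witnessing vectors $\boldsymbol{p},\boldsymbol{q},\boldsymbol{s}\in\VV$ and scalars $a,b\in\FF\setminus\{0\}$ with $m=ab^{-1}$, exactly as in Definition~\ref{def:multi-ratio}. Applying $L$ and using additivity together with right-homogeneity $L(\bv c)=(L\bv)c$, one checks that the vectors $L\boldsymbol{p},L\boldsymbol{q},L\boldsymbol{s}$ realise the configuration for the image points: $\pi(P)=\langle L\boldsymbol{p}\rangle$, $\pi(Q)=\langle L\boldsymbol{q}\rangle$, $\pi(S)=\langle L\boldsymbol{s}\rangle$, $\pi(R)=\langle L\boldsymbol{p}+L\boldsymbol{q}\rangle$, $\pi(T)=\langle L\boldsymbol{p}+L\boldsymbol{s}\rangle$, and $\pi(U)=\langle L\boldsymbol{p}+(L\boldsymbol{q})a\rangle=\langle L\boldsymbol{p}+(L\boldsymbol{s})b\rangle$. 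Hence the same pair $(a,b)$, and thus $m=ab^{-1}$, belongs to $\left[\begin{smallmatrix}\pi(P) & \pi(Q) & \pi(S)\\ \pi(U) & \pi(R) & \pi(T)\end{smallmatrix}\right]$. Because $L^{-1}$ is again a bijective linear map, applying the same argument to $\pi^{-1}$ gives the reverse inclusion, and the two multi-ratios coincide as subsets of $\FF$.

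I do not expect a genuine obstacle; the one thing to keep straight is the right-module bookkeeping, namely that a right-linear $L$ commutes with right multiplication by scalars, so the scalars $a,b$ are transported unchanged and not conjugated --- this is precisely why no automorphism of $\FF$ appears, in contrast with the transformation law of the cross-ratio under general (semi-linear) collineations recorded above. As an alternative I would note that one may instead pick an independent pair $\boldsymbol{k},\boldsymbol{l}$ with $\langle\boldsymbol{l}\rangle$ distinct from all six points (possible since in characteristic zero the line has infinitely many points), express the points via non-homogeneous coordinates $p,q,r,s,t,u$, and observe that $L\boldsymbol{k},L\boldsymbol{l}$ is again independent with the image points carrying the same coordinates; Proposition~\ref{prop:m-alg} then yields the identical conjugacy class on both sides.
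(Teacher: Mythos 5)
Your proof is correct, but it takes a genuinely different route from the paper's. The paper fixes a non-homogeneous coordinate system as in Proposition~\ref{prop:m-alg}, invokes the fact that the collineations in question are generated by the affine maps $\phi\mapsto a\phi+b$ together with the inversion $\phi\mapsto\phi^{-1}$, and checks by direct calculation that each generator preserves the conjugacy class on the right-hand side of \eqref{eq:m-alg}. You instead argue straight from Definition~\ref{def:multi-ratio}: a bijective right-linear $L$ transports the witnessing triple $\boldsymbol{p},\boldsymbol{q},\boldsymbol{s}$ to $L\boldsymbol{p},L\boldsymbol{q},L\boldsymbol{s}$ with the \emph{same} scalars $a,b$, precisely because right-linearity commutes with right scalar multiplication, and the reverse inclusion follows by applying the same argument to $L^{-1}$; this yields equality of the two multi-ratios as subsets of $\FF$. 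Your argument is shorter, coordinate-free, avoids the appeal to the generation of the group by affine maps and inversions (which the paper takes for granted), and makes completely transparent why no automorphism or conjugation of $\FF$ intervenes. What the paper's computation buys in exchange is an explicit verification at the level of the algebraic expression \eqref{eq:m-alg}, in the same coordinates used elsewhere in the paper (e.g.\ in Proposition~\ref{prop:Veblen-alg}), which is the form in which the invariance is actually applied later. Both proofs are valid; your closing remark about transporting non-homogeneous coordinates along $L\boldsymbol{k},L\boldsymbol{l}$ is a correct third variant that reconciles the two.
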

\begin{proof}
	Fix coordinate system, like in Proposition~\ref{prop:m-alg}, and use the fact, that such collineations are generated by affine transformations $\phi\mapsto a \phi  + b$ and inversions $\phi\mapsto \phi^{-1}$
	\begin{equation*}
	(a \phi + b)(c\phi + d)^{-1} = ac^{-1} + (b-ac^{-1}d)(c\phi +d)^{-1}.
	\end{equation*}
	By direct calculation, both transformations preserve the multi-ratio \eqref{eq:m-alg} understood as the class of conjugate elements.
\end{proof}

\subsection{Quadrangular set of points on projective line}
A \emph{complete quadrangle} is a projective figure formed by four points (vertices) $A,B,C,D$ in the
plane, no three of which are collinear, and the six distinct lines (sides) that are
produced by joining them pairwise. The intersection points $AB,\dots , CD$ of the six lines with a line 
not incident with vertices of the quadrangle form the \emph{quadrangular set} \cite{Staudt}, see Figure~\ref{fig:quadrangular-set}. 
\begin{figure}
	\begin{center}
		\includegraphics[width=12cm]{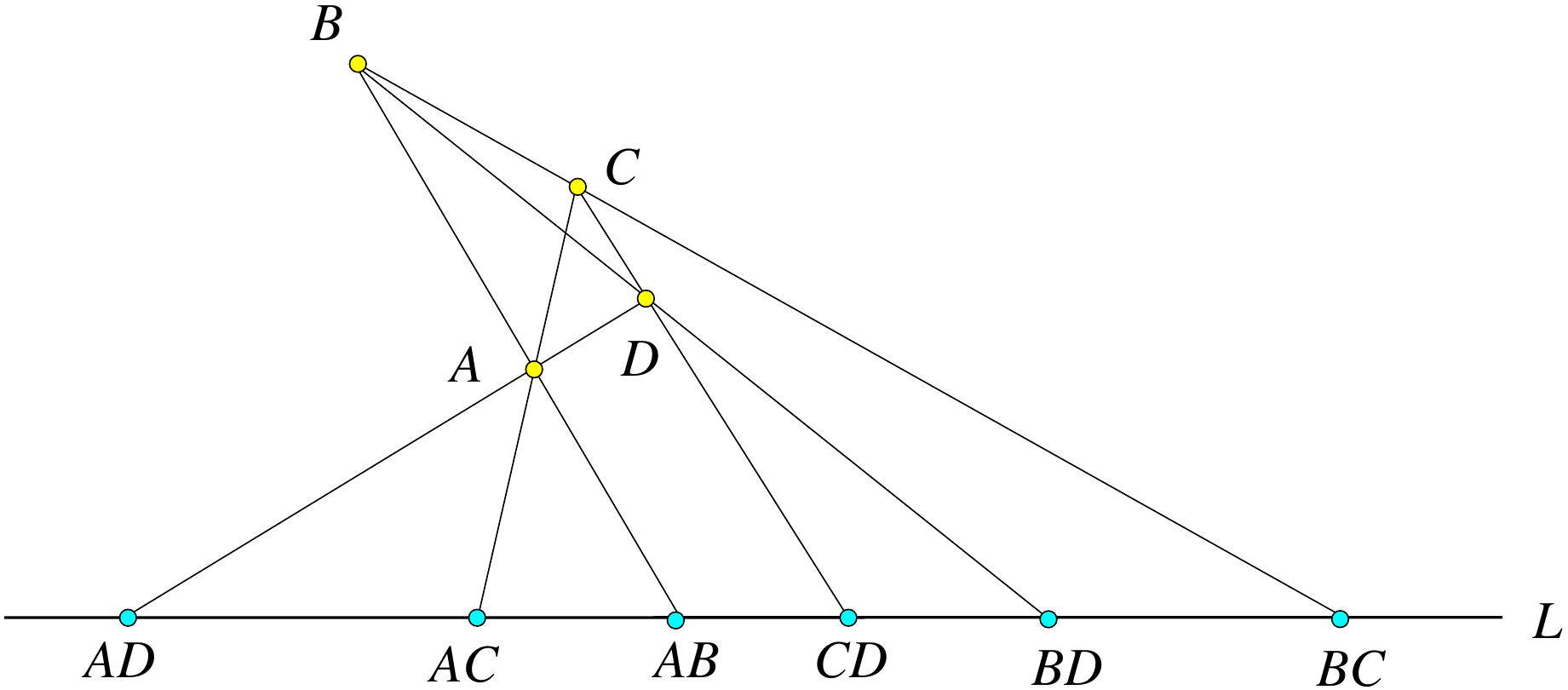}
	\end{center}
	\caption{Quadrangular set of points} 
	\label{fig:quadrangular-set}
\end{figure}
It is known that, by Desargues theorem, any five points of the quadrangular set (labeling fixed) determine uniquely the sixth point of the set. Moreover, collineations map quadrangular sets into quadrangular sets.

\begin{Rem}
	The ordering of the points is important, up to permutation of the letters $A,B,C,D$. By combinatorial arguments one can show that given five generic points of the projective line there are $30$ positions of the sixth point such that for appropriate ordering the six points form a quadrangular set. 
\end{Rem}
\begin{Prop} \label{prop:Veblen-alg}
	The six distinct points $AB,\dots ,CD$ form a quadrangular set if and only if their their non-homogeneous cooridinates $\Phi_{AB}, \dots , \Phi_{CD}$ satisfy the multi-ratio condition
	\begin{equation} \label{eq:quad-mr}
	(\Phi_{AD} - \Phi_{AC}) (\Phi_{AD} - \Phi_{AB})^{-1}  
	(\Phi_{BD} - \Phi_{AB}) (\Phi_{BD} - \Phi_{BC})^{-1}  
	(\Phi_{CD} - \Phi_{BC}) (\Phi_{CD} - \Phi_{AC})^{-1}   = 1 .
	\end{equation}
\end{Prop}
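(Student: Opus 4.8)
The plan is to put the configuration in a normal form by a collineation, to verify the equivalent non-commutative multi-ratio identity by an explicit linear-algebra computation, and to deduce the converse from a uniqueness argument. Both conditions in the statement are invariant under collineations of the ambient plane: a collineation sends complete quadrangles to complete quadrangles, hence quadrangular sets to quadrangular sets, while Proposition~\ref{prop:mult-inv} shows that validity of \eqref{eq:quad-mr} is unaffected by the induced transformation of $L$. Moreover the linear collineations of $\PP(\FF,\VV)$, $\dim\VV=3$, act transitively on complete quadrangles: if $\langle\boldsymbol{v}_1\rangle,\langle\boldsymbol{v}_2\rangle,\langle\boldsymbol{v}_3\rangle,\langle\boldsymbol{v}_4\rangle$ is a frame with $\boldsymbol{v}_4=\boldsymbol{v}_1c_1+\boldsymbol{v}_2c_2+\boldsymbol{v}_3c_3$ (necessarily all $c_i\neq0$), the linear map $\boldsymbol{v}_i\mapsto e_ic_i^{-1}$ ($i=1,2,3$) carries it to the standard frame $A=\langle e_1\rangle$, $B=\langle e_2\rangle$, $C=\langle e_3\rangle$, $D=\langle e_1+e_2+e_3\rangle$. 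It therefore suffices to treat this case, with $L$ the plane $\{v_1+\beta v_2+\gamma v_3=0\}$; that $L$ avoids the four vertices is exactly the requirement $\beta,\gamma\neq0$ and $1+\beta+\gamma\neq0$.

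Intersecting $L$ with each side is a one-line computation; for instance $AB=\langle e_1,e_2\rangle$ meets $L$ in $\langle e_2-e_1\beta\rangle$ and $AD=\langle e_1,e_2+e_3\rangle$ meets it in $\langle e_2+e_3-e_1(\beta+\gamma)\rangle$, and similarly for the remaining four points. Combining Definition~\ref{def:multi-ratio} with Proposition~\ref{prop:m-alg}, the relation \eqref{eq:quad-mr} is exactly the assertion that the multi-ratio $\left[ \begin{array}{ccc} \Phi_{AD} & \Phi_{CD} & \Phi_{BD}\\ \Phi_{BC} & \Phi_{AC} & \Phi_{AB}\end{array}\right]$ equals $[1]$: the product on the left of \eqref{eq:quad-mr} is, up to a cyclic rotation and an overall inversion — operations leaving the element $1$ fixed — the product in \eqref{eq:m-alg} attached to this arrangement of the six points. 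To evaluate this multi-ratio through Definition~\ref{def:multi-ratio}, take $\boldsymbol{p}$ spanning $\Phi_{AD}$; there are unique rescalings for which $\boldsymbol{q}$ spans $\Phi_{CD}$ with $\langle\boldsymbol{p}+\boldsymbol{q}\rangle=\Phi_{AC}$ and $\boldsymbol{s}$ spans $\Phi_{BD}$ with $\langle\boldsymbol{p}+\boldsymbol{s}\rangle=\Phi_{AB}$, namely $\boldsymbol{p}=e_2+e_3-e_1(\beta+\gamma)$, $\boldsymbol{q}=-e_1-e_2+e_3\gamma^{-1}(1+\beta)$ and $\boldsymbol{s}=-e_1+e_2\beta^{-1}(1+\gamma)-e_3$. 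Requiring $\langle\boldsymbol{p}+\boldsymbol{q}a\rangle=\Phi_{BC}$ and $\langle\boldsymbol{p}+\boldsymbol{s}b\rangle=\Phi_{BC}$ — each time by making one coordinate of the combination vanish — forces $a=b=-(\beta+\gamma)$, so $ab^{-1}=1$ and the multi-ratio is $[1]$; this proves the forward implication.

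For the converse, recall (as noted before the statement) that, by Desargues' theorem, any five of the six points of a quadrangular set with fixed labelling determine the sixth. On the other hand \eqref{eq:quad-mr} is a fractional-linear (Moebius) relation in each of its entries, so for the other five points in general position it likewise determines the sixth uniquely. Since by the forward implication the six points of a genuine quadrangular set satisfy \eqref{eq:quad-mr}, the two prescriptions for the sixth point coincide; hence six distinct points, five of them in general position, satisfy \eqref{eq:quad-mr} if and only if they form a quadrangular set.

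I expect the only real difficulty to be the non-commutative bookkeeping: one has to fix representatives and the arrangement of the six points into the multi-ratio array with care, because the underlying cross- and multi-ratios are only conjugacy classes and it is only the specialisation ``$=1$'' that is an honest identity in $\FF$; a minor additional point is to pin down the generic-position hypotheses (six distinct points, $L$ off the vertices and the diagonal points of the quadrangle) that make the Desargues-uniqueness and the Moebius inversions in the converse legitimate.
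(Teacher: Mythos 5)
Your proof is correct, and the forward implication follows a genuinely different route from the paper's. The paper fixes the transversal line as a coordinate axis, reconstructs the quadrangle $A,B,C,D$ over the five points $AB,AC,AD,BC,CD$ by introducing auxiliary scalar parameters $\sigma,\mu,\lambda$ for the vertices, and eliminates these parameters to exhibit the multi-ratio identity for the computed $\Phi_{BD}$; it is a direct, self-contained coordinate computation. You instead normalize the quadrangle to the standard frame by a linear collineation (using Proposition~\ref{prop:mult-inv} together with the stability of quadrangular sets to justify the reduction --- the transitivity argument via $\boldsymbol{v}_i\mapsto e_ic_i^{-1}$ is fine for a right $\FF$-space), compute the six transversal points in terms of the two parameters $\beta,\gamma$ of the line, and evaluate the multi-ratio through the vector-based Definition~\ref{def:multi-ratio}. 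I checked the representatives $\boldsymbol{p},\boldsymbol{q},\boldsymbol{s}$, the values $a=b=-(\beta+\gamma)$ (note $\beta+\gamma\neq 0$ is exactly the hypothesis $AD\neq BC$, so $ab^{-1}$ is legitimate), and the identification of \eqref{eq:quad-mr} with the statement that the multi-ratio is $[1]$ up to a cyclic conjugation and an inversion; all of this is sound, and it matters here that $[1]=\{1\}$ is a singleton conjugacy class so the class-valued invariant specializes to an honest identity. Your approach buys a cleaner, more conceptual computation (the quadrangle is rigid, only the line varies) at the cost of leaning on the invariance machinery of Propositions~\ref{prop:m-alg} and~\ref{prop:mult-inv}; the paper's version is self-contained and produces along the way the intermediate relations \eqref{eq:mu-sigma}--\eqref{eq:lambda-sigma}, which it reuses. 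The converse --- matching the unique sixth point of the incidence construction against the unique solution of the Moebius-type relation \eqref{eq:quad-mr} in each entry --- is the same argument as the paper's, and you flag the same genericity caveats that the paper leaves implicit.
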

\begin{proof}
 Given five points of the set, we reconstruct the planar quadrilateral which allows to obtain the sixth point. It is known~\cite{VeblenYoung} that the construction is independent on the freedom in choice of the quadrilateral. 
Because we were not able to find the multi-ratio characterization of the quadrangular sets in the  non-commutative case we present its detailed derivation.  

	In the general case fix coordinate system on the line  $L = \{ (a,0) | a\in \FF\}$. Choose an arbitrary point $A\neq L$, which can be given then non-homogeneous coordinates $(0,1)$. The last freedom in the construction is the choice of a point $B$ on the line $\langle A, AB \rangle$, which fixes its coordinates 
	\begin{equation*}
	(\Phi_B^1, \Phi_B^2) = (\Phi_{AB} (1-\sigma),\sigma) , \qquad \sigma\in\FF \setminus \{0,1 \} .
	\end{equation*}
	Then the coordinates of the point $C = \langle A,AC \rangle \cap  \langle B ,BC \rangle$ are
	\begin{equation*}
	(\Phi_C^1, \Phi_C^2) = (\Phi_{AC} (1-\mu),\mu) , \qquad \mu =
	\left[ \Phi_{AC} - \Phi_{AB} + (\Phi_{AB} - \Phi_{BC}) \sigma^{-1} \right]^{-1} (\Phi_{AC} - \Phi_{BC}) ;
	\end{equation*}
	notice identity
	\begin{equation} \label{eq:mu-sigma}
	\mu^{-1} - 1 = (\Phi_{AC} - \Phi_{BC})^{-1} (\Phi_{AB} - \Phi_{BC}) (\sigma^{-1} - 1).
	\end{equation}
	Similarly, the coordinates of the point $D = \langle A, AD \rangle \cap  \langle C, CD \rangle$ read
	\begin{equation*}
	(\Phi_D^1, \Phi_D^2) = (\Phi_{AD} (1-\lambda),\lambda) , \qquad \lambda =
	\left[ \Phi_{AD} - \Phi_{AC} + (\Phi_{AC} - \Phi_{CD}) \mu^{-1} \right]^{-1} (\Phi_{AD} - \Phi_{CD}) ,
	\end{equation*}
	and then
	\begin{equation} \label{eq:lambda-mu}
	\lambda^{-1} - 1 = (\Phi_{AD} - \Phi_{CD})^{-1} (\Phi_{AC} - \Phi_{CD}) (\mu^{-1} - 1),
	\end{equation}
	moreover
	\begin{equation} \label{eq:mu-lambda}
	(\Phi_{CD} - \Phi_{AC})^{-1} (\Phi_{AD} - \Phi_{AC}) = \mu^{-1}(\mu - \lambda)(1-\lambda)^{-1}.
	\end{equation}
	Finally, the non-homogeneous coordinates $(\Phi_{BD},0)$ of the point $BD = L \cap \langle B,D \rangle $ are given by
	\begin{equation} \label{eq:BD}
	\Phi_{BD} = \Phi_{AB} (1-\sigma) + \left[ \Phi_{AD}(1-\lambda) - \Phi_{AB}(1-\sigma) \right] (\sigma - \lambda)^{-1}\sigma,
	\end{equation}
	what implies
	\begin{equation} \label{eq:lambda-sigma}
 (\Phi_{BD} - \Phi_{AB})^{-1} (\Phi_{AD} - \Phi_{AB}) = \sigma^{-1}(\sigma - \lambda)(1-\lambda)^{-1}.
	\end{equation}
	Equations \eqref{eq:mu-sigma}-\eqref{eq:lambda-sigma} give then
	\begin{gather*}
(\Phi_{BD} - \Phi_{BC})   -
 (\Phi_{CD} - \Phi_{BC}) (\Phi_{CD} - \Phi_{AC})^{-1} (\Phi_{AD} - \Phi_{AC})         (\Phi_{AD} - \Phi_{AB}) ^{-1} (\Phi_{BD} - \Phi_{AB}) = \\
 \left[  (\Phi_{AD} - \Phi_{CD}) (\lambda^{-1}-1) - 
 (\Phi_{AB} - \Phi_{BC}) (\sigma^{-1}-1) + 
 (\Phi_{CD} - \Phi_{BC}) (\mu^{-1}-1) \right] \lambda(\sigma-\lambda)^{-1}\sigma = 0, 
	\end{gather*}
	which concludes the first part of the proof.

Because equation~\eqref{eq:quad-mr} is uniquely solvable for any of its six points, once other five are given, and by the analogous property of the quadrangular set, the condition described by the equation completely characterizes quadrangular sets of the projective line.		
\end{proof}

\section{Desargues maps into projective line} \label{sec:Des-line}
\subsection{The Veblen configuration and the multi-ratio}
Consider Veblen (or Menelaus~\cite{KoSchief-Men}) configuration $(6_2,4_3)$ in projective space, i.e. six points and four lines with two lines incident with each point, and three points incident with each line, see Figure~\ref{fig:Veblen}. We label points of the configuration by two element subsets of the four element set $\{A,B,C,D\}$, and lines by single elements of the same set. A point is incident with the line if its label contains the label of the line.
\begin{figure}
	\begin{center}
		\includegraphics[width=7cm]{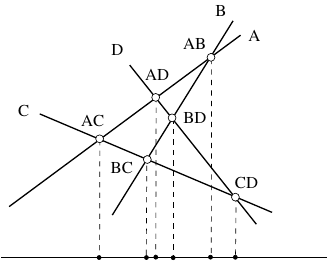}
	\end{center}
	\caption{Veblen configuration} 
	\label{fig:Veblen}
\end{figure}

Let us present an algebraic description of the Veblen configuration, which can be considered as a non-commutative version of the theorem of  Menelaus~\cite{Coxeter-Greitzer}. 
\begin{Prop}
	Given points $AB\in\langle BC,BD \rangle$, $AD\in\langle BD,CD \rangle$, $AC\in\langle BC,CD \rangle$ on three sides of the triangle $BC,BD,CD$, and distinct from its vertices. These three points are collinear if and only if 
	the corresponding proportionality coefficients $a,b,c\in\FF\setminus\{ 0,1 \}$ between their non-homogeneous coordinates, as defined by
	\begin{gather} \label{eq:abc-lin}
	(\bPhi_{BC} - \bPhi_{AB}) = (\bPhi_{BD} - \bPhi_{AB}) a, \\
	(\bPhi_{CD} - \bPhi_{AD}) = (\bPhi_{BD} - \bPhi_{AD}) b, \qquad
	(\bPhi_{BC} - \bPhi_{AC}) = (\bPhi_{CD} - \bPhi_{AC}) c, \nonumber
	\end{gather}
	satisfy condition
	\begin{equation}
	a=bc. \label{eq:a=bc}
	\end{equation}
\end{Prop}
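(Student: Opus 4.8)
The plan is to reduce the collinearity of the three points $AB$, $AC$, $AD$ to a short linear-algebra computation over $\FF$, after choosing convenient homogeneous representatives of all six points of the configuration. Since $BC$, $BD$, $CD$ are not collinear, the whole figure lies in the projective plane they span; fixing an affine chart in which all six points are finite, lift each point $V$ to the vector $\bv_V := (\bPhi_V,1)$. Then $\bv_{BC},\bv_{BD},\bv_{CD}$ are linearly independent and represent the triangle's vertices, and every other point of the configuration is a combination of two of them.

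First I would rewrite the three incidence-with-ratio conditions \eqref{eq:abc-lin} in homogeneous form. Solving the first of them for $\bPhi_{AB}$ gives $\bPhi_{AB} = (\bPhi_{BC} - \bPhi_{BD}a)(1-a)^{-1}$ (the inverse exists because $a\neq 1$, which is forced by $BC\neq BD$; likewise $a\neq 0$ since $AB\neq BC$, so indeed $a\in\FF\setminus\{0,1\}$, and similarly for $b,c$). Clearing the right factor $(1-a)^{-1}$ yields the representative $\bv_{AB}\sim \bv_{BC} - \bv_{BD}a$, and conversely this representative reproduces \eqref{eq:abc-lin} exactly, because $a$ commutes with $1-a$ and hence with $(1-a)^{-1}$. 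In the same way $\bv_{AD}\sim \bv_{CD} - \bv_{BD}b$ and $\bv_{AC}\sim \bv_{BC} - \bv_{CD}c$. Note the three points $AB$, $AC$, $AD$ are automatically pairwise distinct: if, say, $AB=AC$, this point would lie on both $\langle BC,BD\rangle$ and $\langle BC,CD\rangle$, whose only common point is $BC$ (the two lines are distinct, as $BC,BD,CD$ are not collinear), contradicting $AB\neq BC$. Hence ``collinear'' is unambiguous.

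The final step is a linear-dependence test. The points $AB$, $AC$, $AD$ are collinear if and only if their representatives are linearly dependent over $\FF$, i.e. there is a nonzero triple $(\alpha,\beta,\gamma)$ with $\bv_{AB}\alpha + \bv_{AC}\beta + \bv_{AD}\gamma = 0$. Expanding in the basis $\bv_{BC},\bv_{BD},\bv_{CD}$ and equating coefficients gives $\alpha+\beta=0$, $a\alpha+b\gamma=0$, $\gamma=c\beta$; eliminating $\beta$ and $\gamma$ leaves $(a-bc)\alpha=0$. Since $\alpha=0$ forces $\beta=\gamma=0$, a nontrivial solution exists precisely when $a=bc$, which is \eqref{eq:a=bc}.

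The only delicate point I anticipate is bookkeeping of left versus right multiplication: in \eqref{eq:abc-lin} the ratios stand on the right of the coordinate differences, so one must check they translate into right-scalar relations among the homogeneous representatives; the identity $a(1-a)=(1-a)a$ is exactly what makes the affine and homogeneous descriptions agree. Everything else is a routine calculation in the basis $\bv_{BC},\bv_{BD},\bv_{CD}$.
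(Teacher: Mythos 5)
Your proof is correct, and it takes a slightly different (and arguably cleaner) route than the paper's. The paper works entirely in the affine chart: it parametrizes the collinearity of $AB$, $AC$, $AD$ by a single right scalar $\lambda$ via $\bPhi_{AB}-\bPhi_{AC}=(\bPhi_{AD}-\bPhi_{AC})\lambda$, computes $\lambda$ in two ways using the linear independence of $\bPhi_{BD}-\bPhi_{CD}$ and $\bPhi_{BC}-\bPhi_{CD}$, namely $\lambda=1-(c-1)(a-1)^{-1}=(b-1)c(a-1)^{-1}$, and reads off $a=bc$; the converse is a separate direct substitution. You instead lift to homogeneous representatives $\bv_V=(\bPhi_V,1)$, observe that the defining relations are equivalent to $\bv_{AB}\sim\bv_{BC}-\bv_{BD}a$, $\bv_{AD}\sim\bv_{CD}-\bv_{BD}b$, $\bv_{AC}\sim\bv_{BC}-\bv_{CD}c$, and reduce collinearity to linear dependence in the right $\FF$-basis $\bv_{BC},\bv_{BD},\bv_{CD}$; the system $\alpha+\beta=0$, $a\alpha+b\gamma=0$, $\gamma=c\beta$ does yield $(a-bc)\alpha=0$ with $\alpha=0$ forcing triviality, so both implications come out of one computation. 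I checked the right-scalar bookkeeping: $(\bv_{BC}-\bv_{BD}a)(1-a)^{-1}$ has last coordinate $1$ and first coordinate $(\bPhi_{BC}-\bPhi_{BD}a)(1-a)^{-1}=\bPhi_{AB}$, so your normalization is consistent (the commutation of $a$ with $(1-a)^{-1}$ is automatic and, as you note, harmless). What your version buys is symmetry and a genuine ``if and only if'' in a single dependence test, plus an explicit verification that $AB$, $AC$, $AD$ are pairwise distinct; what the paper's version buys is that the intermediate quantity $\lambda$ and the affine-chart formulas feed directly into the multi-ratio computations used in the subsequent corollary and in the quadrangular-set proposition.
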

\begin{proof} 
	To show that the collinearity implies condition \eqref{eq:a=bc} assume that the vectors $\bPhi_{AB}, \bPhi_{AC}, \bPhi_{AD}$, as calculated from the above linear relations, satisfy the constraint of the form
	\begin{equation*}
	\bPhi_{AB} - \bPhi_{AC} = (\bPhi_{AD} - \bPhi_{AC} )\lambda .
	\end{equation*}
	The linear independence of vectors $ \bPhi_{BD} - \bPhi_{CD} $ and  $ \bPhi_{BC} - \bPhi_{CD} $ implies then
	\begin{gather*}
	\lambda = 1 - (c-1) (a-1)^{-1} = (b-1)c(a-1)^{-1},
	\end{gather*}
	which gives equation \eqref{eq:a=bc}.
	
	From the other side, insert the condition \eqref{eq:a=bc} into first of the above linear equations, which in conjunction with other two gives 
	\begin{equation*}
	(\bPhi_{AB} - \bPhi_{AD} ) (1  - bc)= (\bPhi_{AC} - \bPhi_{AD} )(1-c),
	\end{equation*}
	thus showing the collinearity.
\end{proof}
\begin{Cor} \label{cor:Veblen-multiratio}
	Assume that for fixed coordinate number $i$ all components $\Phi_{AB}^i,\dots , \Phi_{CD}^i$ of the points of the Veblen configuration are distinct (see Figure~\ref{fig:Veblen}), then 
	the components satisfy the following multi-ratio condition
	\begin{equation*} \label{eq:multiratio}
	(\Phi_{CD}^i - \Phi_{AC}^i) (\Phi_{CD}^i - \Phi_{AD}^i)^{-1} 
	(\Phi_{BD}^i-\Phi_{AD}^i) (\Phi_{BD}^i - \Phi_{AB}^i)^{-1} (\Phi_{BC}^i-\Phi_{AB}^i) 
	(\Phi_{BC}^i-\Phi_{AC}^i)^{-1} = 1 .
	\end{equation*}
\end{Cor}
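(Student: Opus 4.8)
The plan is to read off the asserted multi-ratio relation directly from the previous Proposition by restricting the vector identities \eqref{eq:abc-lin} to a single coordinate. First I would note that in the Veblen configuration the three points $AB,AC,AD$ are precisely those incident with the line labelled $A$, hence collinear, while $AB\in\langle BC,BD\rangle$, $AD\in\langle BD,CD\rangle$ and $AC\in\langle BC,CD\rangle$ hold because these are the lines labelled $B$, $D$ and $C$ respectively. Thus all hypotheses of the previous Proposition are met, and the coefficients $a,b,c\in\FF\setminus\{0,1\}$ introduced in \eqref{eq:abc-lin} satisfy $a=bc$ by \eqref{eq:a=bc}.

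Next I would take the $i$-th component of each of the three vector equations in \eqref{eq:abc-lin}. Since the right action of $\FF$ on the coordinate module is performed componentwise in a fixed affine chart, this gives the scalar relations
\[
\Phi_{BC}^i-\Phi_{AB}^i=(\Phi_{BD}^i-\Phi_{AB}^i)a,\quad
\Phi_{CD}^i-\Phi_{AD}^i=(\Phi_{BD}^i-\Phi_{AD}^i)b,\quad
\Phi_{BC}^i-\Phi_{AC}^i=(\Phi_{CD}^i-\Phi_{AC}^i)c.
\]
By the standing assumption that the six numbers $\Phi_{AB}^i,\dots,\Phi_{CD}^i$ are pairwise distinct, each difference appearing here is a nonzero, hence invertible, element of the division ring, so I may solve for the coefficients: $a=(\Phi_{BD}^i-\Phi_{AB}^i)^{-1}(\Phi_{BC}^i-\Phi_{AB}^i)$, $b=(\Phi_{BD}^i-\Phi_{AD}^i)^{-1}(\Phi_{CD}^i-\Phi_{AD}^i)$, and $c=(\Phi_{CD}^i-\Phi_{AC}^i)^{-1}(\Phi_{BC}^i-\Phi_{AC}^i)$.

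Finally I would feed these three expressions into the relation $a=bc$, rewritten equivalently as $b^{-1}ac^{-1}=1$. A direct substitution turns $b^{-1}ac^{-1}$ into the word
\[
(\Phi_{CD}^i-\Phi_{AD}^i)^{-1}(\Phi_{BD}^i-\Phi_{AD}^i)(\Phi_{BD}^i-\Phi_{AB}^i)^{-1}(\Phi_{BC}^i-\Phi_{AB}^i)(\Phi_{BC}^i-\Phi_{AC}^i)^{-1}(\Phi_{CD}^i-\Phi_{AC}^i),
\]
which is exactly the multi-ratio of the statement with its leading factor $\Phi_{CD}^i-\Phi_{AC}^i$ cycled to the end; conjugating the trivial identity $b^{-1}ac^{-1}=1$ by $\Phi_{CD}^i-\Phi_{AC}^i$ restores the claimed cyclic order and yields the result.

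There is no genuine obstacle here, since the geometric content has already been established in the previous Proposition; the only points deserving a line of care are that extracting one coordinate is legitimate because scalar multiplication acts componentwise in an affine chart, that the distinctness hypothesis is exactly what makes the relevant differences invertible, and that the concluding rearrangement is the elementary non-commutative equivalence $a=bc\iff b^{-1}ac^{-1}=1$, which is stable under conjugation.
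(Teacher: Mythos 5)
Your proposal is correct and follows essentially the same route as the paper, which simply inserts the expressions $a=(\Phi_{BD}^i-\Phi_{AB}^i)^{-1}(\Phi_{BC}^i-\Phi_{AB}^i)$, $b=(\Phi_{BD}^i-\Phi_{AD}^i)^{-1}(\Phi_{CD}^i-\Phi_{AD}^i)$, $c=(\Phi_{CD}^i-\Phi_{AC}^i)^{-1}(\Phi_{BC}^i-\Phi_{AC}^i)$ into the relation $a=bc$ of the preceding Proposition. Your extra remarks --- that componentwise extraction is legitimate, that distinctness guarantees invertibility of the differences, and that the stated cyclic order is a conjugate of $b^{-1}ac^{-1}=1$ --- are exactly the details the paper leaves implicit.
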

\begin{proof}
	Insert expressions
	\begin{align*}
	a = & (\Phi_{BD}^i - \Phi_{AB}^i)^{-1} (\Phi_{BC}^i-\Phi_{AB}^i) , \\
	b = & (\Phi_{BD}^i - \Phi_{AD}^i)^{-1} (\Phi_{CD}^i-\Phi_{AD}^i) , \\
	c = & (\Phi_{CD}^i - \Phi_{AC}^i)^{-1} (\Phi_{BC}^i-\Phi_{AC}^i) , 
	\end{align*}
	into the condition \eqref{eq:a=bc}.
\end{proof}
We conclude this Section with a result, which justifies the statement that \emph{quadrangular sets should be considered as Veblen configurations in the geometry of projective line}.
\begin{Prop} \label{prop:Veblen-proj-qs}
	In the plane of the Veblen configuration consider point $O$ not on lines of the configuration. The intersection points of lines joining $O$ to vertices of the configuration with an arbitrary line not incident with $O$ form a quadrangular set.
\end{Prop}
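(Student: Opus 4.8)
The plan is to introduce coordinates on the plane of the configuration adapted to the pair $(O,\ell)$, so that central projection from $O$ onto $\ell$ becomes nothing more than reading off one affine coordinate; the statement then follows from Corollary~\ref{cor:Veblen-multiratio} and Proposition~\ref{prop:Veblen-alg}.

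Since $O\notin\ell$, one may pick a basis $(\boldsymbol{o},\boldsymbol{k},\boldsymbol{l})$ of the three-dimensional space underlying the plane with $O=\langle\boldsymbol{o}\rangle$ and with $\ell$ the projective line on the subspace $\langle\boldsymbol{k},\boldsymbol{l}\rangle$. I would work in the affine chart of the plane normalised so that the $\boldsymbol{k}$-component equals $1$; its line at infinity is $\langle O,\langle\boldsymbol{l}\rangle\rangle$, and I would choose $\langle\boldsymbol{l}\rangle\in\ell$ so that this line avoids the six vertices $AB,\dots,CD$ of the configuration --- possible, since it is an arbitrary line through $O$ and only the six lines $\langle O,AB\rangle,\dots,\langle O,CD\rangle$ are forbidden. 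In this chart the two affine coordinates of a point are its $\boldsymbol{o}$- and $\boldsymbol{l}$-components, whereas the central projection $X\mapsto\langle O,X\rangle\cap\ell$ simply discards the $\boldsymbol{o}$-component; hence the $\boldsymbol{l}$-component of the affine coordinate of a vertex $X$ equals the non-homogeneous coordinate (with respect to $\boldsymbol{k},\boldsymbol{l}$) of its image on $\ell$. I would also point out that $O$ must in addition lie off the three diagonals $\langle AB,CD\rangle$, $\langle AC,BD\rangle$, $\langle AD,BC\rangle$ of the configuration, for otherwise two of the six images coincide and there is no quadrangular set of six points to speak of; so $O$ is taken in general position with respect to the configuration, which in particular makes the six $\boldsymbol{l}$-components pairwise distinct.

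With these coordinates in place, Corollary~\ref{cor:Veblen-multiratio}, applied with the index $i$ corresponding to the $\boldsymbol{l}$-coordinate, states exactly that the six non-homogeneous coordinates of the projected points obey a ``multi-ratio equals one'' relation; it remains to recognise this relation as condition~\eqref{eq:quad-mr}. Performing the relabelling $A\leftrightarrow D$ of the four letters --- a legitimate symmetry of a complete quadrangle --- and using that a product of six elements of $\FF$ equals $1$ precisely when each of its cyclic rearrangements does and when its inverse does, one matches the two labelled ``hexagons of differences'', and Proposition~\ref{prop:Veblen-alg} then gives that the projected points form a quadrangular set. (Equivalently, the coordinate bookkeeping can be avoided altogether by appealing to Proposition~\ref{prop:mult-inv} and to the facts that linear collineations carry Veblen configurations to Veblen configurations and quadrangular sets to quadrangular sets, thereby reducing at once to the normal position $O=[1:0:0]$, $\ell=\{x_1=0\}$, in which the projection is $[x_1:x_2:x_3]\mapsto[x_2:x_3]$.) The one genuinely delicate step is the last identification of multi-ratios, which over a division ring must be carried out by the explicit cyclic/inversion bookkeeping just described rather than by a vague appeal to symmetry of the multi-ratio; the choice of the adapted chart and the genericity assumptions on $O$ and on $\langle\boldsymbol{l}\rangle$ are routine but should be stated explicitly.
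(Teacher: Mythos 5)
Your overall strategy --- adapt coordinates so that projection from $O$ onto $\ell$ amounts to reading off one non-homogeneous component, invoke Corollary~\ref{cor:Veblen-multiratio}, and then recognise the resulting relation as condition~\eqref{eq:quad-mr} --- is genuinely different from the paper's argument, which is purely synthetic: there $O$ together with the three configuration points off a chosen configuration line is taken as a complete quadrangle, that quadrangle cuts a quadrangular set on the chosen line, and one concludes because perspectivities preserve quadrangular sets. Your coordinate setup and genericity remarks are fine, but the step you yourself flag as delicate contains a real gap, and the bookkeeping you propose does not close it. Write $u_1=\Phi_{CD}-\Phi_{AC}$, $u_2=\Phi_{CD}-\Phi_{AD}$, $u_3=\Phi_{BD}-\Phi_{AD}$, $u_4=\Phi_{BD}-\Phi_{AB}$, $u_5=\Phi_{BC}-\Phi_{AB}$, $u_6=\Phi_{BC}-\Phi_{AC}$, so that Corollary~\ref{cor:Veblen-multiratio} gives $u_1u_2^{-1}u_3u_4^{-1}u_5u_6^{-1}=1$. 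Relabelling by $A\leftrightarrow D$ and substituting into \eqref{eq:quad-mr} produces, after the harmless central sign flips, the product $u_2u_3^{-1}u_4u_5^{-1}u_6u_1^{-1}$; requiring this to equal $1$ is, via cyclic shifts and inversion, the requirement $u_1u_6^{-1}u_5u_4^{-1}u_3u_2^{-1}=1$, i.e.\ the \emph{reversal} of the Corollary's product. Reversal is not generated by cyclic rearrangement and inversion (inversion reverses the order \emph{and} inverts each factor), and over a non-commutative $\FF$ the vanishing of $abcdef-1$ does not entail that of the reversed product. The obstruction is combinatorial: the hexagon of differences in \eqref{eq:quad-mr} and the one in Corollary~\ref{cor:Veblen-multiratio} are genuinely different $6$-cycles on the octahedron $J(4,2)$ (they share only the edges $\{AB,BD\}$ and $\{AC,CD\}$), and no permutation of the four letters carries one onto the other with matching cyclic orientation.

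The proof is repairable, because the Proposition only asserts that the six projected points form a quadrangular set for \emph{some} labelling, and you are free to choose an assignment that is not induced by a permutation of $\{A,B,C,D\}$. Take
\begin{equation*}
\Psi_{AB}=\Phi_{AB},\quad \Psi_{CD}=\Phi_{CD},\quad \Psi_{AC}=\Phi_{AD},\quad \Psi_{AD}=\Phi_{BD},\quad \Psi_{BD}=\Phi_{BC},\quad \Psi_{BC}=\Phi_{AC},
\end{equation*}
which is a letter permutation composed with the complementary-pair involution $XY\mapsto\{A,B,C,D\}\setminus\{X,Y\}$. Then the left-hand side of \eqref{eq:quad-mr} for $\Psi$ equals $u_3u_4^{-1}u_5u_6^{-1}u_1u_2^{-1}$, an exact cyclic shift of the Corollary's product, and Proposition~\ref{prop:Veblen-alg} applies. (If one insists on the labelling given by your transposition, one needs in addition the classical theorem that the complementary relabelling of a quadrangular set is again a quadrangular set --- a genuine incidence theorem, not a formal identity.) Alternatively, the paper's one-perspectivity argument avoids the multi-ratio identification entirely and delivers the labelling for free.
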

\begin{proof}
	Take the point $O$ as the first vertex of the quadrangle, fix a line of the Veblen configuration, and use three remaining points of the configuration as three remaining vertices of the quadrangle. On the line we have built then a quadrangular set. The lines joining the points of the Veblen configuration with point $O$ are the lines joining $O$ to points of the quadrangular set. Any transversal section of the lines by another line gives six points perspective with the quadrangular set. Because such transformations map quadrangular sets into quadrangular sets~\cite{VeblenYoung} we obtain the statement.
\end{proof}
\begin{Rem}
	The case when $O$ is a point at infinity and the line is a coordinate line is actually visualized in Figure~\ref{fig:Veblen}.
\end{Rem}

\subsection{Desargues maps}

From point of view of difference equations usually one considers maps of $\ZZ^N$ lattice. Recently integrable systems on other regular lattices are also of some interest, see for example \cite{DNS-hex}. In particular, the Desargues maps, although initially defined on multidimensional integer lattice, allow for an interpretation~\cite{Dol-AN} as maps from multidimensional root lattice of type $A$. Such an approach from the very beginning takes into account the corresponding affine Weyl group symmetry of the discrete KP system. 

Recall that the $N\geq 2$-dimensional root lattice $Q(A_N)$ 
is generated by vectors along the edges of regular $N$-simplex in Euclidean space~\cite{ConwaySloane,MoodyPatera}. Vertices of the lattice  tessellate the space into $N$ types of convex polytopes $P(k,N)$, $k=1,\dots,N$, called \emph{ambo-simplices}. It is known that the corresponding  affine Weyl group $W(A_N)$ acts on the Delaunay tiling by permuting tiles within each class $P(k,N)$. The $1$-skeleton of $P(k,N)$ is 
the so called Johnson graph $J(N+1,k)$: its vertices are labeled by 
$k$-point subsets of
$\{1,2, \dots , N+1\}$, and edges are the pairs of such sets with
$(k-1)$-point intersection. 

The tiles $P(1,N)$ are congruent to the initial $N$-simplex which generates the vertices of the root lattice. We color its faces $P(1,2)$ in black. The faces $P(2,2)$ of the simplex $P(N,N)$ we color white. Then $P(2,3)$ which is regular octahedron has four black and four white triangular faces, see Figure~\ref{fig:Des-map}.
\begin{Def}
	By Desargues map $\phi \colon Q(A_N) \to \PP(\FF,\VV)$ we mean a map for which images of vertices of simplices $P(1,2)$ with black triangular faces are collinear.
\end{Def}
\begin{figure}
	\begin{center}
		\includegraphics[width=8cm]{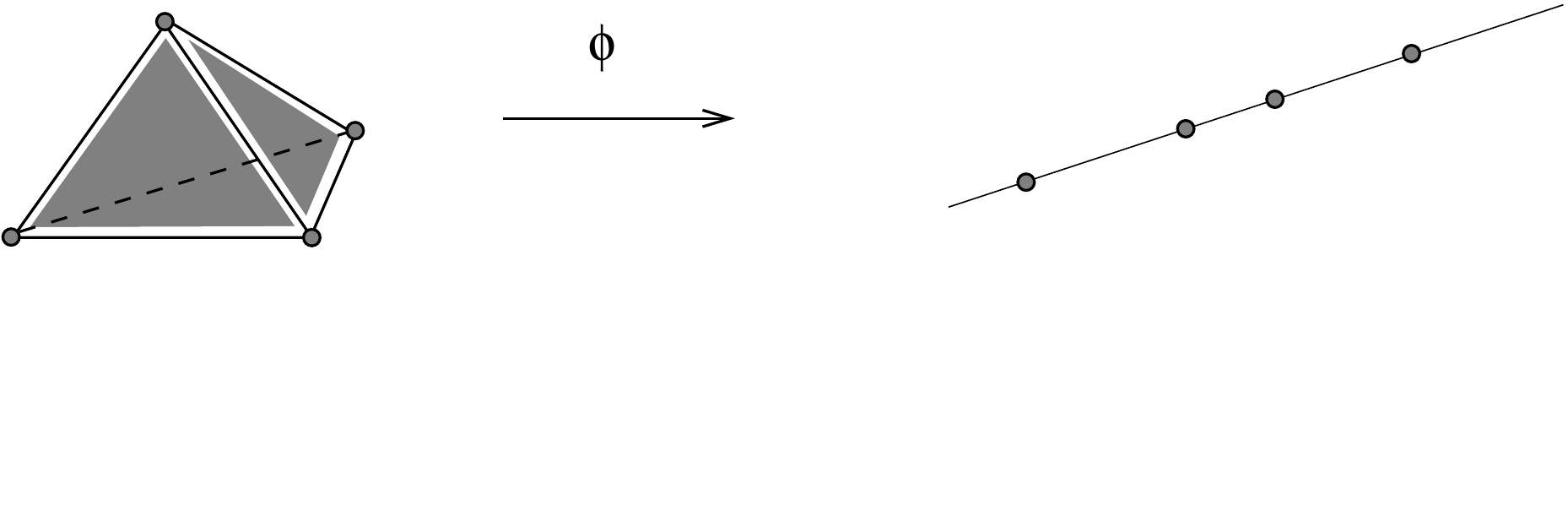}
		
		\includegraphics[width=9cm]{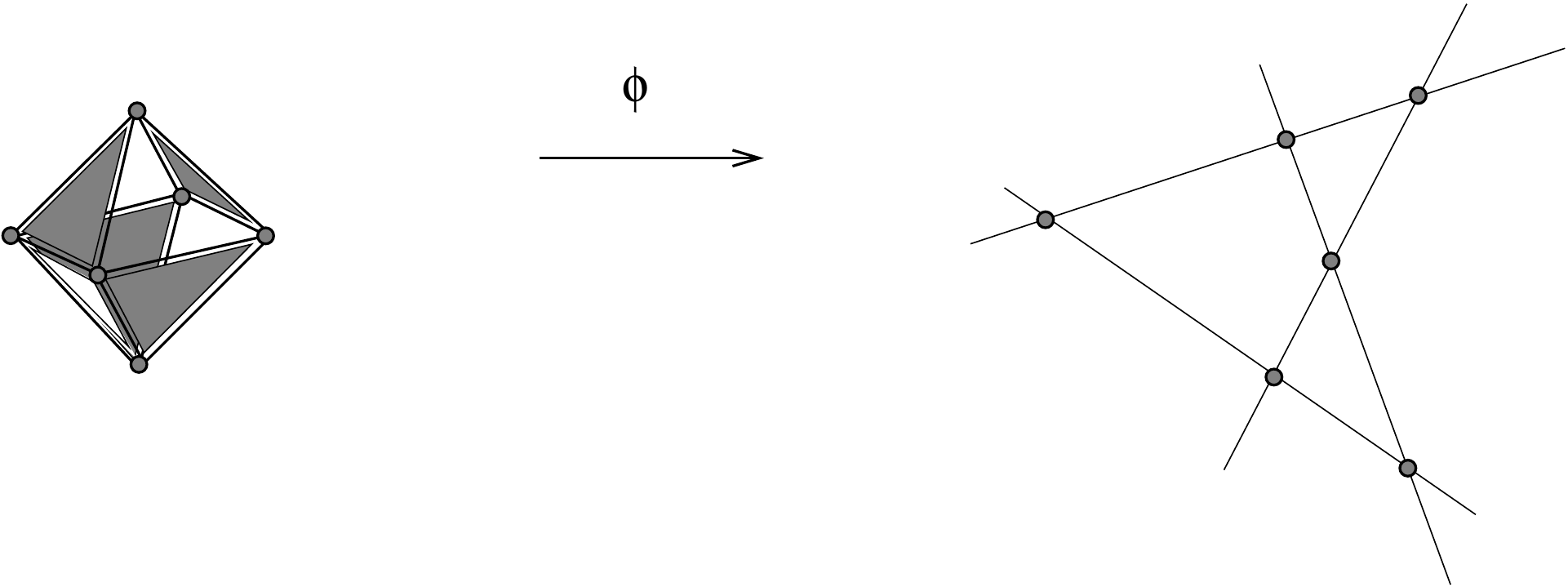}
	\end{center}
	\caption{Desargues map} 
	\label{fig:Des-map}
\end{figure}
To avoid degenerations it is implicitly assumed that images of vertices of simplices $P(2,2)$ with white triangular faces are in generic position. 
Then the octahedra $P(2,3)$ are mapped into Veblen configurations, see Figure~\ref{fig:Des-map}. Moreover, after introduction of appropriate $\ZZ^N$ coordinates in the lattice the non-homogeneous coordinates of the map satisfy equation~\eqref{eq:NCSKP}, compare with Corollary~\ref{cor:Veblen-multiratio}. Guided by Proposition~\ref{prop:Veblen-proj-qs} we give our answer to Question 2.
\begin{Def}
	By Desargues map of root lattice $ Q(A_N)$  into a projective line we mean a map for which images of vertices of ambo-simplices $P(2,3)$ are quadrangular sets with labeling induced by that of Johnson graph $J(4,2)$.
\end{Def}

\section{Desargues maps in Moebius chain geometry}
\label{sec:CG}
Below we describe the concept of Moebius chain geometry of a projective line, which allows for a construction of quadrangular sets which satisfy certain additional property, adjusted to the structure. This will give our answer to Question~1.
\subsection{The concept of Moebius chain geometry}
Assume that the division ring $\FF$ contains a proper subfield $\KK$ in its center. The division ring $\FF$ can be considered then as a division $\KK$-algebra. Correspondingly, the projective line  over $\FF$ inherits additional structure, best described within the concept of chain geometry \cite{Benz,Herzer,BlunckHerzer}. 
Define the chains as images of the canonically embedded $\KK$-line (called the standard chain) under action of the group of collineations induced by linear maps of the $\FF$-line. Points are called cocatenal if they belong to a common chain.
\begin{Rem}
The simplest example for $\KK=\RR\subset\CC=\FF$ is the classical conformal Moebius geometry of circles (as chains) in the Riemann sphere (complex line or conformal plane). The notion of chain geometry applies actually to any $\KK$-algebra, however we will be dealing exclusively with division algebras (i.e. with the so called~\cite{Herzer} Moebius chain geometries). 
\end{Rem}
The $\KK$-vector space $\FF$ can be given natural affine space structure. The straight lines are the chains which pass through the infinity point of the projective line $ \FF\cup \{ \infty \}$. Notice that the notion of "straight line" depends actually on the particular choice of the infinity point. Two chains ar called tangent in point $P$ when after sending $P$ to infinity the chains are parallel in corresponding affine space.
Notice following results of the Moebius chain geometry:
\begin{enumerate}
\item Any three distinct points of the $\FF$-line are contained in exactly one chain.
\item Four distinct points are cocatenal if and only if their cross-ratio is well defined element of $\KK \setminus \{ 0,1 \}$.
\item (Miquel condition) Given four chains $\mathcal{C}_i$, $i=1,2,3,4$, no three of which have a common point, but 
$\mathcal{C}_i \cap \mathcal{C}_{i+1} = \{ P_i, Q_i \}$ for every $i$ (subscripts are taken modulo $4$). Then the four points $P_i$ are cocatenal if and only if the four points $Q_i$ are cocatenal.
\end{enumerate}
\begin{Rem}
	One can consider~\cite{BlunckHavlicek} generalizations of the Moebius chain geometry for which $\KK$ is a subdivision ring of $\FF$ not necessarily contained in its center (like for example in the case $\CC\subset \HH$, where chains are two dimensional spheres $S^2$ in $S^4 \equiv \HH\cup\{\infty\}$). However such a generalization may violate some of the properties above.
\end{Rem}

\subsection{Moebius--Veblen configuration}
Let us present an analogue of the Veblen axiom/theorem, see also \cite{BlunckHerzer} for its general version (in a slightly different formulation) for chain geometries.
\begin{figure}
	\begin{center}
		\includegraphics[width=16cm]{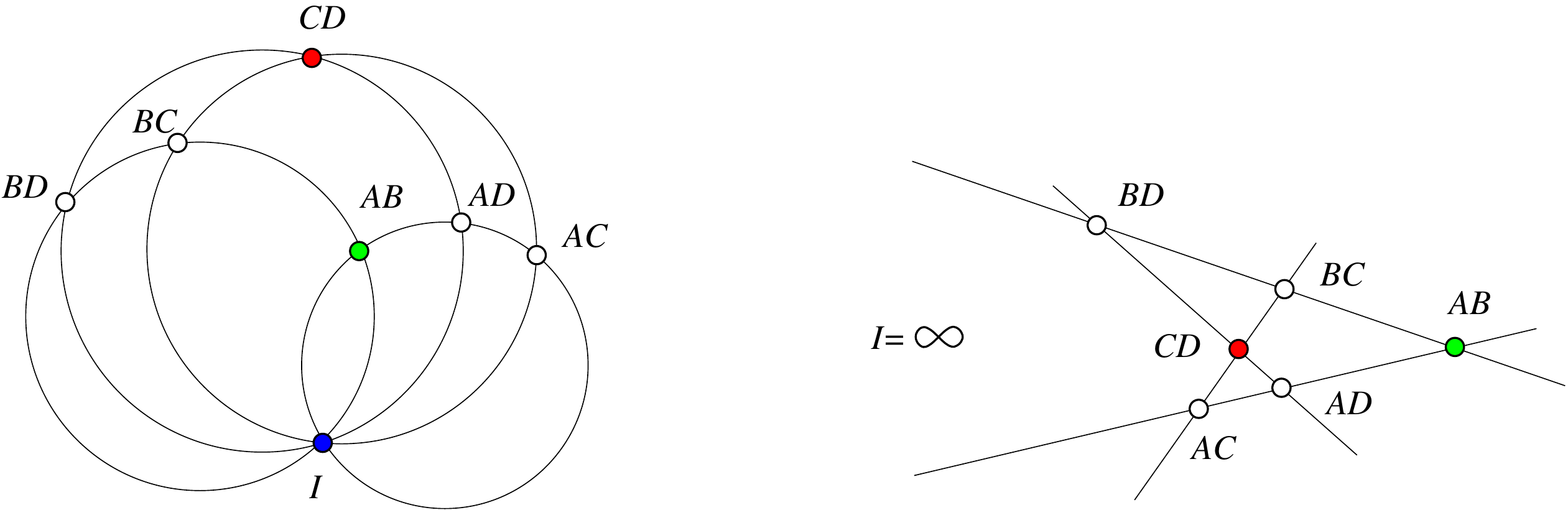}
	\end{center}
	\caption{Moebius--Veblen configuration in Moebius chain geometry} 
	\label{fig:Chain-Veblen}
\end{figure}
\begin{Prop} \label{prop:c-Veblen}
Given five distinct points $AB, AC, AD, BC, BD$ of the Moebius chain space such that the chains $\mathcal{C}(AB, AC, AD)$ and $\mathcal{C}(AB, BC, BD)$ have in common additional point $I\neq AB$. Then also the chains $\mathcal{C}(I, AC, BC)$ and 
$\mathcal{C}(I, AD, BD)$ have in common additional point $CD\neq I$, or alternatively the chains are tangent in $I$. 
\end{Prop}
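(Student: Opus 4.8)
The plan is to normalize the configuration by sending the point $I$ to the infinity point of the projective line $\FF\cup\{\infty\}$. Since the chains are, by definition, the orbit of the standard chain under the group of collineations induced by linear maps of the $\FF$-line, and this group is transitive on points, there is such a collineation $g$ with $g(I)=\infty$; it maps chains to chains, preserves incidence, and — because tangency is itself defined by sending a point to infinity — preserves tangency. So it suffices to prove the statement when $I=\infty$. In that case the chains through $I$ are exactly the affine lines of the $\KK$-vector space $\FF$ (this is precisely the paper's description of the ``straight lines''), all five given points become finite points, and two chains through $I$ are tangent at $I$ if and only if the corresponding affine lines are parallel.

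The argument then becomes affine. First I would note that, by hypothesis, $\ell_A:=\mathcal{C}(AB,AC,AD)$ and $\ell_B:=\mathcal{C}(AB,BC,BD)$ both contain $I=\infty$, so each is an affine line of $\FF$ and the two lines meet at $AB$. If $\ell_A=\ell_B$ then all five points are cocatenal with $I$ and $\mathcal{C}(I,AC,BC)=\mathcal{C}(I,AD,BD)$ equals that common line, so there is nothing to prove; otherwise $\ell_A$ and $\ell_B$ are two distinct affine lines through the common point $AB$, hence they span a $2$-dimensional affine plane $\Pi\subset\FF$, and since $AC,AD\in\ell_A$ and $BC,BD\in\ell_B$ all five points lie in $\Pi$. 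Next, let $m_C$ and $m_D$ be the affine lines joining $AC$ to $BC$ and $AD$ to $BD$; they lie in $\Pi$, they are chains through $\infty$, and by uniqueness of the chain through three distinct points we get $m_C=\mathcal{C}(I,AC,BC)$ and $m_D=\mathcal{C}(I,AD,BD)$. A one-line check gives $m_C\neq m_D$ (equality would put $AC,AD$ on $m_C=\ell_A$ and $BC,BD$ on $m_C=\ell_B$, forcing $\ell_A=\ell_B$). Finally, two distinct lines in the affine plane $\Pi$ over the commutative field $\KK$ either meet in a single point $CD$, which is then a finite point (so $CD\neq I$) lying on both $\mathcal{C}(I,AC,BC)$ and $\mathcal{C}(I,AD,BD)$, or they are parallel, which is exactly tangency at $I$. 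Pulling back by $g$ gives the general statement.

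The step I expect to be the real obstacle — and the one that makes this reduction legitimate — is the coplanarity assertion. A priori $\FF$ may have $\KK$-dimension larger than two (as for $\HH$ over $\RR$), so two affine lines in $\FF$ generically neither meet nor run parallel; ``meet or parallel'' is a true dichotomy only inside a $2$-dimensional affine subspace. This is settled by observing that $\ell_A$ and $\ell_B$ are concurrent at $AB$ (two concurrent affine lines always span a plane) and that everything else ($AC,AD$ on $\ell_A$; $BC,BD$ on $\ell_B$; hence $m_C$ and $m_D$) is confined to that plane. The remaining points are routine bookkeeping: that $I,AC,BC$ and $I,AD,BD$ are genuinely triples of distinct points (immediate after the normalization) and the degenerate case $\ell_A=\ell_B$. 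If one preferred an algebraic route, the statement could instead be checked via the criterion that four points are cocatenal precisely when their cross-ratio lies in $\KK\setminus\{0,1\}$, but the affine picture is shorter and dovetails with the paper's own definition of tangency.
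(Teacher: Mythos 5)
Your proof is correct and takes essentially the same route as the paper's: normalize by a collineation sending $I$ to the infinity point, so that the two chains through $I$ become affine $\KK$-lines meeting at $AB$, and recover the sixth point as the intersection of $\langle AC,BC\rangle$ and $\langle AD,BD\rangle$ (or parallelism, i.e.\ tangency at $I$). The paper's proof is a single sentence; you usefully make explicit the points it leaves implicit, notably that the two concurrent lines span a $2$-dimensional affine plane over $\KK$ containing all five points (without which ``meet or parallel'' would fail when $\dim_\KK\FF>2$), as well as the degenerate cases $\ell_A=\ell_B$ and $m_C=m_D$.
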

\begin{proof}
By sending the intersection point $I$ to infinity, see Figure~\ref{fig:Chain-Veblen},  we obtain five points of the classical Veblen configuration in the affine 
space $(\KK,\FF)$ what allows to construct the sixth point of the "straightened configuration", and then eventually to go back to the original one. 
\end{proof}
\begin{Rem}
	Notice that in the classical Moebius geometry of the complex projective line the assumption about existence of the point $I$ is not needed. However this assumption is essential, if we would like to perform the construction in the case of quaternionic projective line with $\RR\subset \HH$. 
\end{Rem}
The points  $AB,AC, AD, CD,  BC, BD$ will be called ordinary points of the Moebius--Veblen configuration, while $I$ is called the infinity point of the configuration. 
\begin{Prop}
	In the Moebius--Veblen configuration the constructed point $CD$ is the sixth quadrangular point of the initial five ordinary points in the standard labeling.
\end{Prop}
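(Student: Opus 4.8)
The plan is to reduce the Moebius--Veblen statement to the already-proven affine/projective version by means of the same ``send $I$ to infinity'' device used in Proposition~\ref{prop:c-Veblen}. After the collineation that carries $I$ to the infinity point of the $\FF$-line, each chain through $I$ becomes a straight line in the affine $\KK$-structure on $\FF$, so the five chains $\mathcal{C}(AB,AC,AD)$, $\mathcal{C}(AB,BC,BD)$, $\mathcal{C}(I,AC,BC)$, $\mathcal{C}(I,AD,BD)$, together with the constructed sixth chain $\mathcal{C}(I,AB,CD)$ (if $CD\neq I$; the tangent case corresponds to that straight line being parallel), turn into the four lines and one additional line of an honest Veblen configuration $(6_2,4_3)$ in $(\KK,\FF)$, with ordinary points $AB,AC,AD,CD,BC,BD$ and the sixth line being the transversal. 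The point is that the straightened picture is literally the configuration of Figure~\ref{fig:Veblen}.

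First I would fix notation: let $\pi$ be a collineation induced by a linear map of the $\FF$-line sending $I\mapsto\infty$, and write $\Phi_{AB},\dots,\Phi_{CD}$ for the non-homogeneous coordinates of $\pi(AB),\dots,\pi(CD)$ in $\FF$. Because chains through $I$ become $\KK$-affine lines, the three collinearity relations defining the Veblen configuration hold: $\pi(AB),\pi(AC),\pi(BC)$ lie on one line (image of $\mathcal{C}(I,?,?)$ — here one must be careful which triple of ordinary points lies on a common chain through $I$), and similarly for the other two lines, with the coefficients $a,b,c\in\KK\setminus\{0,1\}$ of \eqref{eq:abc-lin}. The construction of $CD$ in Proposition~\ref{prop:c-Veblen} is exactly the intersection of the remaining two lines of the Veblen configuration, so by the Proposition preceding Corollary~\ref{cor:Veblen-multiratio} the coefficients satisfy $a=bc$, hence by Corollary~\ref{cor:Veblen-multiratio} the six coordinates satisfy the multi-ratio condition \eqref{eq:multiratio}. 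Rearranging that condition (it is the cyclic-inverse of \eqref{eq:quad-mr}, or follows from it by the invariance under the relabelling that swaps the roles consistent with the standard quadrangular labeling) and invoking Proposition~\ref{prop:Veblen-alg} — the multi-ratio characterization of quadrangular sets, which is valid over any division ring — shows $\pi(AB),\dots,\pi(CD)$ form a quadrangular set. Finally, since collineations map quadrangular sets to quadrangular sets, applying $\pi^{-1}$ gives that $AB,\dots,CD$ themselves form a quadrangular set, in the stated standard labeling.

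The one genuine subtlety — and the step I would write out carefully — is the bookkeeping of labels: one must check that the combinatorial incidence pattern of the Moebius--Veblen configuration (which chain contains which triple of ordinary points, and which two ordinary points lie on each chain through $I$) matches, after straightening, the incidence pattern of the Veblen configuration in Figure~\ref{fig:Veblen}, and that the resulting labeling of the transversal six points is precisely the ``standard'' quadrangular labeling appearing in Proposition~\ref{prop:Veblen-alg} rather than one of the other $29$ admissible orderings. Concretely: the chain $\mathcal{C}(AB,AC,AD)$ carries the three points whose labels contain $A$; the chain $\mathcal{C}(AB,BC,BD)$ carries those containing $B$; the chains through $I$ carry the pairs $\{AC,BC\}$ (labels containing $C$) and $\{AD,BD\}$ (labels containing $D$); and the constructed chain through $I$ carries $\{AB,CD\}$. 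This is exactly the labeling of the Veblen configuration by two-element subsets of $\{A,B,C,D\}$ described before Corollary~\ref{cor:Veblen-multiratio}, so Proposition~\ref{prop:Veblen-proj-qs} (taking $O$ to be the center of the transversal projection, here the preimage of $\infty$, i.e.\ $I$ itself) identifies the transversal section as a quadrangular set in the standard labeling. The tangent alternative in Proposition~\ref{prop:c-Veblen} corresponds to $CD=\infty$ in the straightened picture, i.e.\ $\Phi_{CD}$ at infinity, which is the usual limiting/degenerate case of the quadrangular set and is handled by the same multi-ratio identity read projectively.
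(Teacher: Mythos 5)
Your proposal follows essentially the same route as the paper: straighten the configuration by sending $I$ to infinity, obtain the affine Veblen/Menelaus relation $a=bc$ with $a,b,c\in\KK\setminus\{0,1\}$, eliminate the (central) coefficients from \eqref{eq:abc-lin} to get the multi-ratio condition over $\FF$, and conclude via Proposition~\ref{prop:Veblen-alg} together with the collineation-invariance of the multi-ratio (Proposition~\ref{prop:mult-inv}). The only misstep is the closing appeal to Proposition~\ref{prop:Veblen-proj-qs} with $O=I$: that proposition projects an $\FF$-planar Veblen configuration onto an $\FF$-line, whereas here the straightened configuration lies in a $\KK$-plane and its points already sit on the $\FF$-line, so the labeling question must be (and in your main argument effectively is) settled algebraically, by checking that the multi-ratio ordering coming from $a=bc$ agrees with that of \eqref{eq:quad-mr} up to a permutation of the quadrangle vertices $A,B,C,D$.
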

\begin{proof}
In the "straightened configuration" apply Proposition~\ref{prop:Veblen-alg} to get  coefficients $a,b,c$ which belong to $\KK\setminus \{0,1 \}$ and satisfy condition \eqref{eq:a=bc}. But now one can eliminate the coefficients directly on the level of 
equations~\eqref{eq:abc-lin} in order to get the multi-ratio condition \eqref{eq:quad-mr} in $\FF$. Proposition~\ref{prop:mult-inv}, which gives invariance of the condition with respect to projective collineations of the $\FF$-line, and Proposition~\ref{prop:Veblen-alg} imply the statement.
\end{proof}
\begin{Rem}
	Even if we do not have the point $I$ to our disposal the sixth point $CD$ of the quadrangular set exists, by the general construction described in Section~\ref{sec:line}. Then one can consider also the corresponding initial chains $\mathcal{C}(AB, AC, AD)$, $\mathcal{C}(AB, BC, BD)$, which contain point $AB$, and the resulting chains $\mathcal{C}(AC, BC, CD)$ and $\mathcal{C}(AD, BD, CD)$ containing the point $CD$. These new chains can be given only \emph{after}  construction of the point. 
\end{Rem}
Finally we give 
our answer to Question~1 presenting the special type of quadrangular sets for which the construction of the sixth point follows from Moebius chain geometry principles, thus incorporating the conformal geometry interpretation of the discrete Schwarzian Kadomtsev--Petviashvili equation into the Desargues map approach.
\begin{Def}
	By Moebius quadrangular set we mean the six ordinary points of the Moebius--Veblen configuration with the labeling as in Proposition~\ref{prop:c-Veblen}. 
\end{Def}
It is well known~\cite{Clifford,KoSchief-Men} that the  Moebius--Veblen configuration can be supplemented by four chains $\mathcal{C}(AB, BC, AC)$, $\mathcal{C}(AB, BD, AD)$, $\mathcal{C}(AC, CD, AD)$ and $\mathcal{C}(BC, CD, BD)$ which then all intersect in the so called Clifford point. The new circle-point configuration of $8$ points and $8$ circles, with each point/circle incident with $4$ circles/points, is called the Clifford configuration. Actually, as it was described in~\cite{Doliwa-isoth}, this result in an  equivalent version was known already to Miquel~\cite{Miquel}.

\section{Conclusion}
The projective structure of the line and the notion of quadrangular sets can be used to provide geometric meaning to non-commutative discrete Schwarzian KP system. When the thiner structure of the underlying division ring is considered then the theory becomes more intriguing. Up to now the case of complex and quaternionic Moebius spaces (with the subfield of real numbers) have been throughly examined~\cite{KoSchief-Men,KoSchief-qSKP}. There are known works, see for example~\cite{BaMaSe}, where the Miquel condition has been used to study particular algebras of quantum integrable systems.
 We expect that the chain geometry may become useful platform to investigate other quantum algebras of mathematical physics.

\section*{Acknowledgments}
Both authors acknowledge numerous discussions with Andrzej Matra\'{s} on various aspects of chain geometries.
The research of A.D. was supported by National Science Centre, Poland, under grant 2015/19/B/ST2/03575 \emph{Discrete integrable systems -- theory and applications}.

\bibliographystyle{amsplain}

\providecommand{\bysame}{\leavevmode\hbox to3em{\hrulefill}\thinspace}

\end{document}